\newcommand{\ket}[1]{|#1\rangle}
\newcommand{\bra}[1]{\langle#1|}
\newcommand{\ketbra}[2]{|#1\rangle\!\langle#2|}
\newtheorem{theorem}{Theorem}[section]
\newtheorem{proposition}{Proposition}[section]
\newtheorem{lemma}[theorem]{Lemma}
\newtheorem{definition}[theorem]{Definition}
\newcommand*{\cA}{\mathcal{A}}
\newcommand*{\cE}{\mathcal{E}}
\DeclareMathOperator\supp{supp}
\newcommand{\MC}{\text{MC}}
\newcommand{\fkl}{\textsc{FKL}\xspace}
\newcommand{\hlz}{\textsc{HLZ}\xspace}
\newcommand{\tqaoa}{\textsc{QAOA}$^{+}$\xspace}
\newcommand{\tfklq}{\textsc{FKL}-\textsc{QAOA}$^{+}$\xspace}
\newcommand{\thlzq}{\textsc{HLZ}-\textsc{QAOA}$^{+}$\xspace}
\newcommand{\tqaoap}[1]{\textsc{QAOA}$_p^{+}$\xspace}
\newcommand{\tfklqp}[1]{\textsc{FKL}-\textsc{QAOA}$_{#1}^{+}$\xspace}
\newcommand{\thlzqp}[1]{\textsc{HLZ}-\textsc{QAOA}$_{#1}^{+}$\xspace}
\newcommand{\mtqaoap}[1]{\textsc{QAOA}_p^{+}\xspace}
\newcommand{\mtfklqp}[1]{\textsc{FKL}\text{-}\textsc{QAOA}_{#1}^{+}\xspace}
\newcommand{\mthlzqp}[1]{\textsc{HLZ}\text{-}\textsc{QAOA}_{#1}^{+}\xspace}
\newcommand{\qaoa}{\textsc{QAOA}\xspace}
\newcommand{\rqaoa}{\textsc{RQAOA}\xspace}
\newcommand{\qaoap}[1]{\textsc{QAOA}$_{#1}$\xspace}
\newcommand{\qaoapm}[1]{\textsc{QAOA}_{#1}\xspace}
\newcommand{\maxcut}{\text{MaxCut}\xspace}
\newcommand{\appRatio}[1]{\alpha_{#1}}
\newcommand*{\ExpE}{\mathbb{E}}
\newcommand*{\proj}[1]{|#1\rangle\langle #1|}
\newcommand{\edgeMinigraph}{\!\begin{tikzpicture}[scale=0.5, every node/.style={scale=0.5}, baseline=-1ex]
        \tikzstyle{every node}=[font=\scriptsize]
	    \coordinate (j) at (0,0);
	    \coordinate (k) at (1,0);
        \draw[thick]  (j) -- (k);
	    \node[below] at (j) {$1$};
	    \node[below] at (k) {$2$};
	    \fill (j) circle(3pt);
	    \fill (k) circle(3pt);	            
\end{tikzpicture}\!
}
\newcommand{\edgeMinigraphScript}{\!\begin{tikzpicture}[scale=0.25, every node/.style={scale=0.6}, baseline=-1ex]
        %\tikzstyle{every node}=[font=\scriptsize]
	    \coordinate (j) at (0,0);
	    \coordinate (k) at (1,0);
        \draw[thick]  (j) -- (k);
	    \node[below] at (j) {$1$};
	    \node[below] at (k) {$2$};
	    \fill (j) circle(4pt);
	    \fill (k) circle(4pt);	            
\end{tikzpicture}\!
}
\newcommand{\tripletMinigraph}{\!\begin{tikzpicture}[scale=0.3, every node/.style={scale=0.3}, baseline=-1ex]
        \tikzstyle{every node}=[font=\scriptsize]
	    \coordinate (j) at (0,0);
	    \coordinate (c) at (1,0);
	    \coordinate (k) at (2,0);
        \draw[thick]  (j) -- (c) -- (k);
	    \node[below] at (j) {$j$};
	    \node[below=2pt] at (c) {$c$};
	    \node[below] at (k) {$k$};
	    \fill (j) circle(3pt);
	    \fill (c) circle(3pt);
	    \fill (k) circle(3pt);	            
\end{tikzpicture}\!
}
\newcommand{\triStarMinigraph}{\begin{tikzpicture}[scale=0.3, every node/.style={scale=0.3}, baseline=-0.6ex]
        \tikzstyle{every node}=[font=\scriptsize]
       \clip (-1.2,-1.7) rectangle (1.2,1.5);
	    \coordinate (c) at (0,0);
	    \coordinate (i) at (0,1);
	    \coordinate (k) at (0.866,-0.5);
	    \coordinate (j) at (-0.866,-0.5);
	    \draw[thick]  (j) -- (c) -- (k);
	    \draw[thick]  (c) -- (i);
	    \node[below] at (j) {$k$};
	    \node[below=2pt] at (c) {$c$};
	    \node[below] at (k) {$l$};
	    \node[right] at (i) {$j$};
	    \fill (i) circle(3pt);
	    \fill (j) circle(3pt);
	    \fill (c) circle(3pt);
	    \fill (k) circle(3pt);	            
\end{tikzpicture}
}
\newcommand{\isolatedTriangle}{\begin{tikzpicture}[scale=0.1,baseline=-0.5ex]
	    \clip (-2,-1.4) rectangle (2,2);
	    \coordinate (c) at (0,0);
	    \coordinate (i) at (0,1);
	    \coordinate (ia) at (0,2);
	    \coordinate (k) at (0.866,-0.5);
	    \coordinate (ka) at (1.732,-1);
	    \coordinate (j) at (-0.866,-0.5);
	    \coordinate (ja) at (-1.732,-1);
	    
	    \draw[semithick] (i) -- (k) -- (j) -- (i);
	    \draw[semithick] (i) -- (ia);
	    \draw[semithick] (k) -- (ka);
	    \draw[semithick] (j) -- (ja);
\end{tikzpicture}
}
\newcommand{\isolatedTriangleScript}{\begin{tikzpicture}[scale=0.08,baseline=-0.2ex, every node/.style={scale=0.5}]
	    \clip (-2,-1.4) rectangle (2,2);
	    \coordinate (c) at (0,0);
	    \coordinate (i) at (0,1);
	    \coordinate (ia) at (0,2);
	    \coordinate (k) at (0.866,-0.5);
	    \coordinate (ka) at (1.732,-1);
	    \coordinate (j) at (-0.866,-0.5);
	    \coordinate (ja) at (-1.732,-1);
	    
	    \draw (i) -- (k) -- (j) -- (i);
	    \draw (i) -- (ia);
	    \draw (k) -- (ka);
	    \draw (j) -- (ja);
\end{tikzpicture}
}
\newcommand{\TriangleScript}{\begin{tikzpicture}[scale=0.1,baseline=-0.2ex]
	    \clip (-1.2,-0.6) rectangle (1.2,1.5);
	    \coordinate (c) at (0,0);
	    \coordinate (i) at (0,1);
	    \coordinate (k) at (0.866,-0.5);
	    \coordinate (j) at (-0.866,-0.5);
	    
	    \draw (i) -- (k) -- (j) -- (i) -- (k);
\end{tikzpicture}
}
\newcommand{\crossedSquare}{\begin{tikzpicture}[scale=0.18,baseline=-0.6ex]
	    \clip (-1,-1) rectangle (1,1);
	    \coordinate (c) at (0,0);
	    \coordinate (i) at (0,0.7071);
	    \coordinate (ia) at (0,1.7071);
	    \coordinate (j) at (0,-0.7071);
	    \coordinate (ja) at (0,-1.7071);
	    \coordinate (k) at (0.7071,0);
	    \coordinate (l) at (-0.7071,0);
	    
	    \draw[semithick] (i) -- (k) -- (j) -- (l) -- (i) -- (k);
	    \draw[semithick] (k) -- (l);
	    \draw[semithick] (ia) -- (i);
	    \draw[semithick] (ja) -- (j);
\end{tikzpicture}
}
\newcommand{\crossedSquareScript}{\begin{tikzpicture}[scale=0.12,baseline=-0.4ex]
	    \clip (-1,-1) rectangle (1,1);
	    \coordinate (c) at (0,0);
	    \coordinate (i) at (0,0.7071);
	    \coordinate (ia) at (0,1.7071);
	    \coordinate (j) at (0,-0.7071);
	    \coordinate (ja) at (0,-1.7071);
	    \coordinate (k) at (0.7071,0);
	    \coordinate (l) at (-0.7071,0);
	    
	    \draw (i) -- (k) -- (j) -- (l) -- (i) -- (k);
	    \draw (k) -- (l);
	    \draw (ia) -- (i);
	    \draw (ja) -- (j);
\end{tikzpicture}
}
\tikzset{cross/.style={cross out, draw=black, minimum size=2*(#1-\pgflinewidth), inner sep=0pt, outer sep=0pt},
	%default radius will be 1pt. 
	cross/.default={1pt}}
\newcommand*{\good}{\mathsf{Good}}
\newcommand*{\post}{\mathsf{Post}}
\newcommand*{\cutsize}{\mathsf{cutsize}}
\title{Twisted hybrid algorithms\\
for combinatorial optimization}
\date{}
\author[ \hspace{-1ex}]{Libor Caha\footnote{libor.caha@tum.de}}
\author[ \hspace{-1ex}]{Alexander Kliesch\footnote{kliesch@ma.tum.de}}
\author[ \hspace{-1ex}]{Robert Koenig\footnote{robert.koenig@tum.de}}
\affil[ \hspace{-1ex}]{\centering 
Zentrum Mathematik, Technical University of Munich\ \&\newline 
 Munich Center for Quantum Science and Technology, Munich, Germany
}
\begin{document}

\maketitle
\vspace{-5ex}
\begin{abstract}

Proposed hybrid algorithms encode a combinatorial cost function into a problem Hamiltonian and  optimize its energy by varying over a set of states with low circuit complexity. Classical processing  is typically only used for the choice of variational parameters following gradient descent.  As a consequence, these approaches are limited by the descriptive power of the associated  states. 

We argue that for certain combinatorial optimization problems, such algorithms can be hybridized further,  thus harnessing the power of efficient non-local classical processing. Specifically, we consider combining a  quantum variational ansatz with a greedy classical post-processing procedure for the \maxcut-problem on  $3$-regular graphs. We show that the average cut-size produced by  this method can be quantified in terms of the energy of a modified problem Hamiltonian. This motivates the consideration of an improved algorithm which variationally optimizes the energy of the modified Hamiltonian. We call this a  twisted hybrid algorithm since the additional classical processing step is combined with a different choice of variational parameters.   We exemplify the viability of this method using the quantum approximate optimization algorithm (\qaoa), giving analytic lower bounds on the expected approximation ratios achieved by twisted \qaoa. These show that the necessary non-locality of the quantum ansatz can be reduced compared to the original \qaoa: We find that for levels $p=2,\ldots, 6$, the level~$p$ can be reduced by one  while roughly maintaining the expected approximation ratio.  This  reduces the circuit depth by~$4$ and the number of variational parameters by~$2$.
  \end{abstract}

\section{Introduction}
Due to their real-world interest, problems and algorithms for combinatorial optimization figure prominently in present-day theoretical computer science. For theoretical physics, the  profound and immediate connections to the physics, e.g., of Ising or Potts models are particularly appealing. Combinatorial optimization also  provides an intriguing potential area of application of near-term quantum devices with clear figures of merit such as approximation ratios. Yet the study of quantum algorithms for these problems is still in its infancy, especially when compared to the intensely studied area of classical algorithms. For example, for classical algorithms, an established bound~\cite{MAXCUTUGC, majoritystablest} on efficiently achievable approximation ratios for \maxcut under the unique games conjecture matches that achieved by the celebrated Goemans-Williamson algorithm \cite{GW} (see also \cite{DelormePoljak93}). It appears rather unlikely that under the unique games conjecture an efficient quantum algorithm can outperform the Goemans-Williamson algorithm for generic graphs. 
 Even the more modest goal of identifying special families of instances for which a quantum algorithm outperforms comparable efficient classical algorithms appears to be out of reach. Independently of whether or not one can find a provable real-world quantum advantage in the setting of combinatorial optimization, or ends up using quantum devices as a heuristic to efficiently find approximate solutions, or finds novel classical algorithms inspired by quantum ones (as has happened before), it is natural to study to what extent existing proposals can be improved in a systematic manner with associated performance guarantees. This is what we pursue here in the context of hybrid classical-quantum algorithms.

For the problem of finding (or approximating) the maximum of a combinatorial cost function $C:\{0,1\}^n\rightarrow\mathbb{R}$ (given by polynomially many terms), typical hybrid algorithms proceed by 
defining the cost function Hamiltonian
\begin{align}
H_C&=\sum_{z\in \{0,1\}^n} C(z) \proj{z} \,
\end{align}
in terms of local terms, and a parametrized family~$\{U_{G}(\theta)\}_{\theta\in \Theta}$ of $n$-qubit unitary circuits. The later might be parametrized by the underlying graph of the cost function or in case of hardware-efficient algorithms tailored to the physical device \cite{kandala_hardware-efficient_2017}. The parametrized family give rise to 
variational ansatz states
\begin{align}
\ket{\Psi(\theta)}&=U_{G}(\theta)\ket{0}^{\otimes n}\ .
\end{align}
that can be prepared with $U_G(\theta)$ from a product state~$\ket{0}^{\otimes n}$. Measuring~$\Psi(\theta)$  in the computational basis then 
provides a sample~$z\in\{0,1\}^n$ from the distribution~$p(z)=|\langle z|\Psi(\theta)\rangle|^2$ such that the expectation value of the associated cost function is equal to the energy~$\ExpE\left[C(z)\right]=\bra{\Psi(\theta)}H_C\ket{\Psi(\theta)}$ of the state~$\Psi(\theta)$ with respect to~$H_C$. Thus the problem
of maximizing~$C$ is translated to that of finding a value of the (vector of) parameters~$\theta$ maximizing the energy of~$\Psi(\theta)$. The latter  step is envisioned to be performed e.g., by numerical gradient descent or a similar classical procedure prescribing (iteratively) what parameters~$\theta$ to try.  The computation of this prescription (according to obtained measurement results) is the classical processing part of the quantum algorithm leading to the term~{\em hybrid}. We will refer to this form of algorithm as a ``bare'' hybrid algorithm in the following.

The potential utility of this approach hinges on a number of factors. Of primary importance --
beyond questions of convergence or efficiency --   is whether the family~$\{\Psi(\theta)\}_{\theta\in\Theta}$ of states is sufficiently rich to variationally capture the (classical) correlations of  high-energy states of~$H_C$. There is an inherent tension here between the requirement of applicability using near-term devices, and the descriptive power, i.e., required complexity of these states: On the one hand, each unitary~$U_G(\theta)$ is supposed to be realized by a low-depth circuit with local gates (making it amenable to experimental realization on a near-term device), and the dimensionality of the parameter  or ``search'' space~$\Theta$ should be low to guarantee fast convergence e.g.,  of gradient descent. On the other hand, states having high energy with respect to~$H_C$ and belonging to the considered family of variational states may have intrinsically high circuit complexity, and, correspondingly, may also require a large number of variational parameters to approximate. The unavoidability of this issue has been demonstrated
using the \maxcut-problem on expander graphs with $n$~vertices and the quantum approximate optimization algorithm (\qaoa) at level~$p$: Here the parameter space is~$\Theta=[0,2\pi)^{2p}$ and the corresponding circuits~$U_G(\theta)$ have depth~$O(pd)$. Locality and symmetry of the ansatz imply that achievable expected approximation ratios are upper bounded by a constant (below that achieved by Goemans-Williamson) unless~$p=\Omega(\log n)$~\cite{Bravetal20}. In fact, the locality of the ansatz alone implies that for smaller values  of~$p$, the achieved expected approximation ratio is not better then of a random guessing for random bipartite graphs, as shown in~\cite{farhi2020quantumA}.

These fundamental limitations of ``standard'' hybrid algorithms are tied to the assumption that an increased complexity of the required quantum operations is unacceptable and/or infeasible in the near term. Under these circumstances, the only way forward appears to be to use alternative, possibly more powerful (e.g., non-local) efficient classical processing which could exploit the limited available  quantum resources more effectively. One example where a classical post-processing is used is~\cite{farhi2020quantumB}, where \qaoa is combined with a greedy ``pruning'' method to produce an  independent set of large size. Here post-processing is needed, in particular, to ensure that the output is indeed an independent set.  Another proposal in this direction is the idea of ``warm-starting'' \qaoa with a solution provided by the Goemans-Williamson algorithm~\cite{Egger2021warmstartingquantum} (see also \cite{tate2021bridging}). The warm-starting approach has the appeal that -- by construction -- the Goemans-Williamson approximation ratio can be guaranteed in this approach (assuming convergence of the energy optimization). An alternative is the recursive \qaoa (\rqaoa) method~\cite{Bravetal20,bravyicoloring} which uses \qaoa states to iteratively identify variables to eliminate. This effectively reduces the problem size but increases the connectivity and thus the circuit complexity of the iteratively obtained subproblems. Furthermore, analytical bounds on the expected approximation ratios are unknown except for very special examples~\cite{Bravetal20}. For both warm-starting \qaoa as well as \rqaoa, one deviates from the original \qaoa ansatz, leading to different variational states and corresponding quantum circuits.

\subsection*{Our contribution}
{\bf Improved hybrid algorithms.} 
Here we consider arguably more minimal adaptions of hybrid variational algorithms for the \maxcut-problem on $3$-regular graphs.  For a given bare hybrid algorithm~$\cA$ involving a family~$\{\Psi(\theta)\}_{\theta\in\Theta}$ of variational ansatz states as described above, we show how to construct a modified algorithm~$\cA^+$ which uses the {\em same} family of states~$\{\Psi(\theta)\}_{\theta\in\Theta}$.  The algorithm~$\cA^{+}$ will be called {\em twisted-$\cA$}. It requires a set of quantum operations that are  comparable (in number and complexity) to that of~$\cA$. In particular,
it involves preparing the states~$\{\Psi(\theta)\}_{\theta\in\Theta}$. In addition,~$\cA^{+}$ uses 
 extra local measurements because the hybrid optimization step is modified: the energy to be optimized is given by a modified problem Hamiltonian~$H_G^{+}$ rather than the \maxcut-problem Hamiltonian~$H_G$ associated with the considered graph~$G$. The modified Hamiltonian~$H_G^+$ is either a $3$- or $4$-local Hamiltonian and (as $H_G$) diagonal in the computational basis.  In particular, this means that measurements of up to $4$~qubits at a time in the computational basis are sufficient to determine the (expected) cost function. 
 
 By construction, the algorithms~$\cA$ and $\cA^+$ achieve (expected) cut sizes (for any fixed instance~$G$) related by the inequalities
 \begin{align}
\ExpE\left[\cutsize \left( \cA(G) \right) \right]\leq 
\ExpE\left[\cutsize \left( \cA^{+}(G) \right) \right]\ \label{eq:lowerboundscutsize}
 \end{align}
 for any (bare) hybrid algorithm~$\cA$, assuming that the optimal parameters are found in the optimization step. Indeed,~\eqref{eq:lowerboundscutsize} follows because, denoting 
 with
 \begin{align}
 \theta_*&=\arg\max_\theta \bra{\Psi(\theta)}H_G\ket{\Psi(\theta)}\ 
 \end{align}
 the optimal parameters for the Hamiltonian~$H_G$, we have  by definition of the algorithms that
 \begin{align}
 \begin{matrix}
 \ExpE\left[\cutsize \left( \cA(G) \right) \right]&=&\bra{\Psi(\theta_*)}H_G\ket{\Psi(\theta_*)}\\
   \ExpE\left[\cutsize \left( \cA^{+}(G) \right) \right]&=&\max_\theta \bra{\Psi(\theta)}H^{+}_G\ket{\Psi(\theta)} \ ,
  \end{matrix}\label{eq:expectationvaluemaxcut}
 \end{align}
 and 
 \begin{align}
 H_G^+&=H_G+\Delta_G\ ,
 \end{align}
 where $\Delta_G$ is a sum of non-negative local operators. These considerations apply to any bare hybrid algorithm~$\cA$. 
 
  {\bf Basic idea.} Our modified algorithms are directly motivated by 
   the work of Feige, Karpinski, and Langberg~\cite{FeigeKarpinskiLangberg} (referred to as \fkl in the following). These authors propose an algorithm for the \maxcut problem on $3$-regular graphs which proceeds by solving a semidefinite program relaxation (similar to Goemans and Williamson), and subsequently improving  the rounded solution by a simple greedy post-processing technique. We also consider the improvement by Halperin, Livnat, and Zwick~\cite{HalperinLivnatZwick} (referred to as \hlz below) which involves a more non-local greedy procedure. 
   
   Consider a simple motivational example of a greedy post-processing procedure that can improve a given cut.  The input will be a 3-regular graph $G=(V,E)$ and a cut $C$. We say that a vertex is unsatisfied when all three of its neighbours lie in the same partition of the cut as it does. The algorithm will repeatedly run through the vertices and check whether some of them are unsatisfied. If it finds an unsatisfied vertex it moves it to the opposite side of the cut and repeats the process with the updated cut until none of the vertices is unsatisfied. Since moving one vertex increases the cut size by 3 and potentially lowers the number of unsatisfied vertices by 4, one can show that this procedure improves the cut size by at least $\frac{3}{4}$ times the number of unsatisfied vertices in the initial cut.    Let us apply this greedy procedure to a random cut, which has an expected approximation ratio of~$1/2$. A vertex will be unsatisfied with probability $2^{-3}$. From the linearity of expectation we have that the greedy procedure will improve the cut by at least $\frac{3}{4\cdot 8}|V|$. Since $|V|=\frac{2}{3}|E|$, we achieve approximation ratio at least $\frac{1}{2}+\frac{1}{16}=0.5625$ in  expectation.

We combine these techniques with a hybrid algorithm~$\cA$ such as level-$p$ \qaoa (in the following denoted by \qaoap{p}), giving a ``twisted'' hybrid algorithm~$\cA^+$. 
The algorithm~$\cA^+$
proceeds by using the variational family of states defined by the algorithm~$\cA$ to obtain an approximate cut, but this step is modified or ``twisted'', as discussed below. The algorithm~$\cA^+$ then attempts to enlarge the cut size of the obtained cut by applying a classical post-processing procedure: We perform either the post-processing procedure by Feige, Karpinski, and Langberg (obtaining an algorithm \fkl-$\cA^+$) or the post-processing procedure by Halperin, Livnat, and Zwick (giving an algorithm \hlz-$\cA^+$). 

Let us now describe the sense in which $\cA^+$ is a ``twisted" form of~$\cA$ and not merely a hybrid algorithm augmented by a subsequent classical post-processing step. This terminology stems from the fact that in the quantum subroutine of the algorithm, the variational parameters (angles) are not optimized with respect to the original problem Hamiltonian~$H_G$. Instead, one can express the expected cut size produced by measuring a state~$\Psi(\theta)$ and using classical post-processing by the expectation value of a modified Hamiltonian~$H_G^+$ (for both~$\fkl$ and $\hlz$) in the variational state~$\Psi(\theta)$. The twisted algorithm~$\cA^+$ thus optimizes the angle~$\theta$ with respect to the modified Hamiltonian~$H_G^+$. Importantly, this does not change the ansatz/variational family of states used. This allows us to make a fair comparison (in terms of quantum resources and, especially, the number of variational parameters) to the original algorithm~$\cA$.

 {\bf Lower bounds on approximation ratios.} 
 We specialize our considerations to \qaoap{p} and establish lower bounds on the approximation ratio for bare and twisted \qaoa, i.e., we consider the algorithms \qaoap{p} and \tqaoap{p}. 
   Specifically, we consider low values of $p$ for $3$-regular graphs, triangle-free $3$-regular graphs and high girth $3$-regular graphs. 
 We denote the expected approximation ratio achieved by an algorithm~$\cA$ on a graph~$G$ with maximum cut size~$\maxcut(G)$ by 
 \begin{align}
     \appRatio{G} \left( \cA \right) := \maxcut(G)^{-1} \cdot \ExpE \left[ \cutsize \left( \cA \left( G \right) \right) \right] \ .
 \end{align}
 In the following, we will refer to the expected approximation ratio achieved by an algorithm~$\cA$ simply as the approximation of~$\cA$ (omitting the term ``expected'') unless specified otherwise. 
 In the case of $\cA = \qaoapm{p}$, $\ExpE \left[ \cutsize \left( \cA \left( G \right) \right) \right]$ is defined as in~\eqref{eq:expectationvaluemaxcut}, but with the level-$p$ \qaoa trial function~$\Psi_G(\beta,\gamma)$, $\beta,\gamma\in [0,2\pi)^{p}$ instead of $\Psi(\Theta)$. 
 
 Our results are summarized in Figure~\ref{fig:results}, which gives our lower bounds on the approximation ratio  for each of these methods.  For comparison, we also state the following known bounds on bare \qaoa for any $3$-regular graph~$G$, 
 \begin{align}
 \left.\begin{matrix}
 \appRatio{G} \left( \qaoapm{1} \right) & \geq &0.6924 \\
 \appRatio{G} \left( \qaoapm{2} \right) &\geq &0.7559 \\
 \appRatio{G} \left( \qaoapm{3} \right) &\geq & 0.79239
 \end{matrix}\right.
 \qquad 
 \begin{matrix}
 \textrm{established in~\cite{qaoaOrigPaper}}\\
 \textrm{conjectured in~\cite{qaoaOrigPaper}, established in~\cite{WurtzLove}}\\
 \textrm{conjectured in~\cite{WurtzLove}.}
 \end{matrix}
  \end{align}
  Also shown in Figure~\ref{fig:results} are the guaranteed approximation ratios of the best-known classical algorithms: This includes the Goemans-Williamson algorithm (\textsc{GW}) for general graphs (which is optimal when assuming the unique games conjecture \cite{MAXCUTUGC}) which achieves 
  \begin{align}
      \alpha_{G}(\textsc{GW})\geq 0.8785\qquad\textrm{ for any graph }~G\qquad\qquad\textrm{ (see~\cite{GW})}\ .
  \end{align}
For $3$-regular graphs, the best efficient classical algorithms are the algorithm by Feige, Karpinski, and Langberg~\cite{FeigeKarpinskiLangberg} which relies on a semidefinite program whose solution is then improved by a simple greedy post-processing technique, and a refinement of this technique by Halperin, Livnat, and Zwick~\cite{HalperinLivnatZwick}. They achieve 
\begin{align}
    \begin{matrix}
        \alpha_G \left( \fkl \right) &\geq &0.924\\
        \alpha_G \left( \hlz \right) &\geq &0.9326
    \end{matrix}
    \qquad\textrm{ for any $3$-regular graph}~G\qquad
    \begin{matrix}
        \textrm{see~\cite{FeigeKarpinskiLangberg}}\\
        \textrm{see~\cite{HalperinLivnatZwick}} .
    \end{matrix}
\end{align}

 \begin{figure}[!ht]
    \centering
    \includegraphics{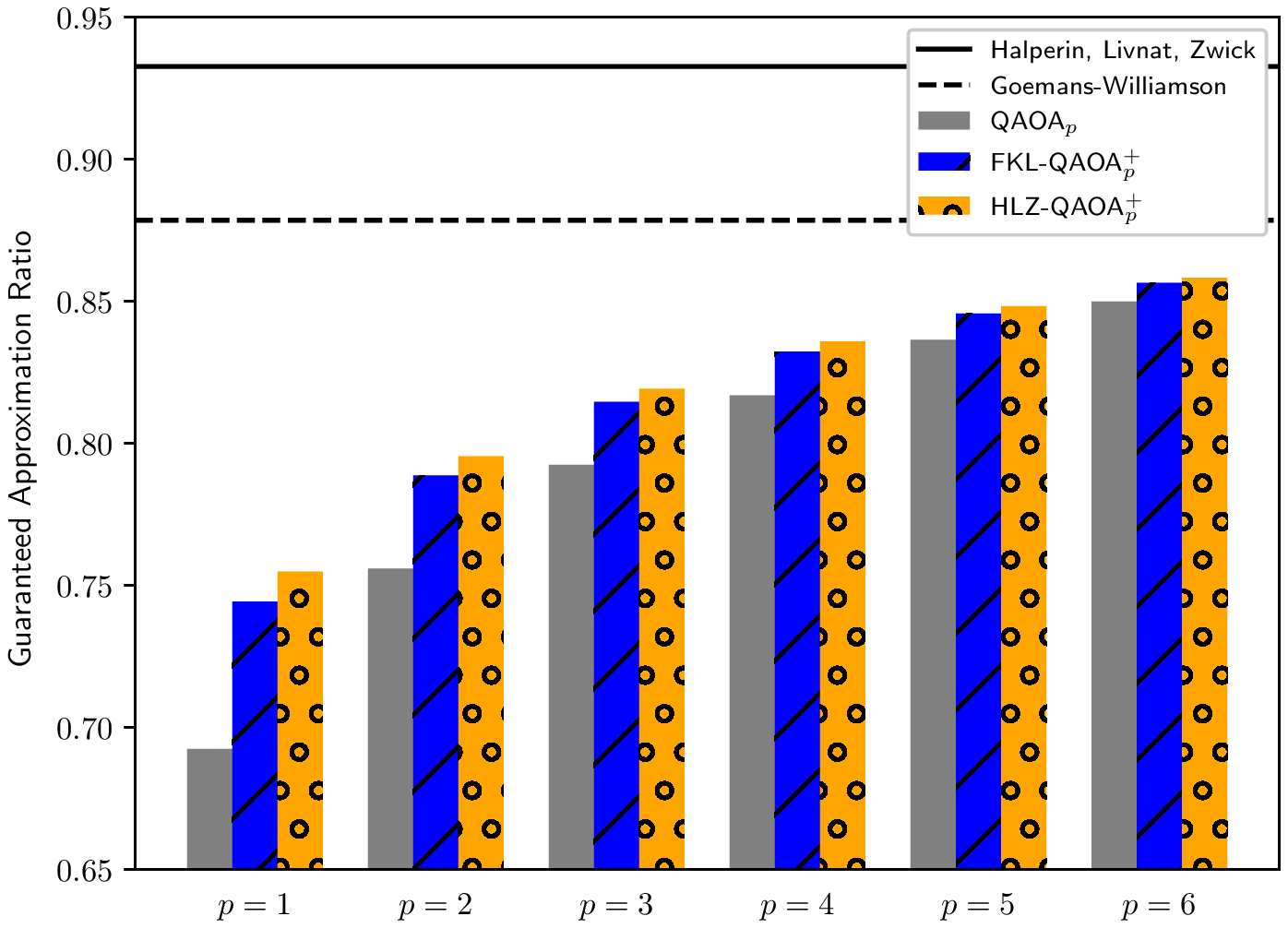}
	\begin{center}
		\begin{tabular}{l || l c | l c | c c | c | c | c}
			Method & \multicolumn{2}{c|}{$p = 1$} & \multicolumn{2}{c|}{$p = 2$} & \multicolumn{2}{c|}{$p = 3$} & $p = 4$ & $p = 5$ & $p=6$ \\ \hline \hline
			Bare \qaoap{p}~\ Eq.~\eqref{eq:QAOAboundsp45} & $\mathbf{0.6924}$ & \!\!\!\!\!\cite{qaoaOrigPaper}\!\!\!& $\mathbf{0.7559}$ & \!\!\!\!\!\cite{WurtzLove}\!\! & $0.7923$ & \!\!\!\!\!\cite{WurtzLove}\!\! & $0.8168$ & $0.8363$ & $0.8498$ \\ 
			\tfklqp{p} Prop.~\ref{prop:twFKL} & $\mathbf{0.7443}$ & & $0.7887$ & & $0.8146$ & & $0.8323$ &  $0.8457$ & $0.8564$\\
			\thlzqp{p} Prop.~\ref{prop:twHLZ} & $0.7548$ & & $0.7954$ & & $0.8191$ & &  $0.8358$ & $0.8482$ & $0.8582$ \\ 
		\end{tabular}
	\end{center}
	\caption{The main results of this work. We compare the provably guaranteed approximation ratios of bare \qaoap{p}, \tfklqp{p} and \thlzqp{p} for $3$-regular graphs with girth greater than $2p + 2$. Numbers written in  boldface  also apply to general $3$-regular graphs. 
		All quantities are rounded down to four decimals. Guaranteed approximation ratios which have been established in other work are indicated with citations. \label{fig:results}}
\end{figure}

We find that going from the original \qaoa to its twisted version leads to a significant improvement, roughly saving one level~$p$: We approximately have
\begin{align}
    \alpha \left( \textsc{QAOA}_{p - 1}^+ \right) \gtrsim \alpha \left( \textsc{QAOA}_p \right)\qquad\textrm{ for }\qquad p=2,\ldots,6\ .
\end{align}

Let us conclude by mentioning a few open problems.  One potential avenue to obtaining improved approximation ratios with hybrid algorithms is to use a different variational family of ansatz states. Here our work gives clear guidance when this is combined with classical post-processing: For a graph~$G$, the energy of a modified cost function Hamiltonian~$H_G^+=H_G+\Delta_G$ should be optimized instead of that of~$H_G$. In particular, since~$\Delta_G$ is a sum of $3$-local terms in the case of \fkl and a sum of $4$-local terms in the case of \hlz, this motivates
introducing new terms (e.g., proportional to these terms) in the ansatz.  Such a modification of the algorithm is superficially related to the fact that the classical (randomized rounding-based) algorithms of~\cite{FeigeKarpinskiLangberg,HalperinLivnatZwick} also use additional ($3$-variable) constraints in the SDP compared to the Goemans-Williamson algorithm.  We note, however, that using different variational ansatz states will require a different accounting of resources (e.g., circuit depth). In contrast, our twisted algorithms use the same circuits to prepare ansatz states as their bare version.

Another promising approach may be to combine warm-starting-type ideas with classical post-processing. Here one could consider algorithms that first solve the SDP underlying the classical algorithms~\cite{FeigeKarpinskiLangberg,HalperinLivnatZwick}, and subsequently prepare a corresponding quantum state. One may 
hope that -- similar to \cite{Egger2021warmstartingquantum} -- suitably designed approaches give a guaranteed approximation ratio matching that of these classical algorithms. 

Moving beyond combinatorial optimization problems, it is natural to ask if variational quantum algorithms for many-body quantum Hamiltonian problems (e.g., quantum analogues of \maxcut as considered in~\cite{AnshuGossetMorenz}) can be improved by similar greedy (quantum) post-processing procedures.

\subsubsection*{Outline}
In Section~\ref{sec:cpostprocessing}, we review the relevant classical post-processing methods that -- in combination with  randomized rounding of the solution of certain SDP relaxations -- yield the best known efficient classical algorithms for \maxcut on $3$-regular graphs. In Section~\ref{sec:qaoamaxcut}, we review the \qaoa and state
a few properties relevant to our subsequent analysis.  In Section~\ref{sec:twistedvariationalalg}, we motivate and define the algorithm~$\cA^+$ obtained from a hybrid algorithm~$\cA$.
Finally, in Section~\ref{sec:lowerboundsqaoa}, we establish our lower bounds on the achieved approximation ratio achieved by the twisted algorithm~\tqaoa.

\section{Classical post-processing methods for MAXCUT\label{sec:cpostprocessing}}

In this section, we describe the two classical post-processing procedures which we build on to define twisted versions of a given hybrid algorithm for the MaxCut problem on $3$-regular graphs. These post-processing procedures are subroutines
of the classical algorithms for MaxCut on bounded degree graphs and graphs with maximum degree $3$ by
Feige, Karpinski, Langberg~\cite{FeigeKarpinskiLangberg}, and Halperin, Livnat, and Zwick~\cite{HalperinLivnatZwick}, respectively. 

Recall the definition of the \maxcut problem: We are given an (undirected, simple) graph~$G=(V,E)$ and are asked assign $2$ colors to vertices $C:V\rightarrow\{0,1\}$, which we refer to as a cut of $G$, that maximizes the number~$\cutsize(C)$ of satisfied edges.
Here we say that an edge~$e=\{u,v\}$ is satisfied by~$C$ if and only if $C(u)\neq C(v)$.  The maximal size~$\cutsize(C)$ of a cut~$C$ of~$G$ is denoted~$\MC(G)$.

The Goemans-Williamson algorithm \cite{GW} for \maxcut proceeds by solving an SDP relaxation \cite{DelormePoljak93} of the \maxcut problem, and subsequently uses a randomized hyper-plane rounding to obtain a cut.
The algorithms of~\cite{FeigeKarpinskiLangberg,HalperinLivnatZwick} also proceed by first solving certain SDPs and applying randomized rounding. The obtained candidate cut is then further processed in a greedy manner in order to improve the cut size. 

Here we review these post-processing procedures and corresponding performance guarantees. One of their key features is that they can  be applied to any candidate cut~$C$ irrespective of whether it is produced e.g., by rounding the solution of an SDP, random guessing, or starting with a fixed cut.  This means that they can also be applied to the output of a hybrid algorithm. We emphasize, however, that our modified hybrid algorithms require a modification going beyond simple post-processing of the classical measurement result, see Section~\ref{sec:twistedvariationalalg} for details.

Although the guaranteed approximation ratio achieved by \hlz is better than the one achieved by \fkl, we investigate both algorithms. The reason for this lies in the locality of the procedures: while \fkl considers only the direct neighborhood of a vertex in a single step and is therefore local, \hlz also considers paths and cycles of lengths in the given graph whose lengths might potentially be unbounded and is therefore not necessarily local. We emphasize, however, that the performance of both procedures in the quantum case can be quantified by considering local operators. 

Both post-processing procedures take as input a cut~$C$. They iteratively work towards (ideally) improving the cutsize by modifying the cut. A single iteration proceeds by identifying a suitable subset~$W\subset V$ of vertices whose assigned color is flipped, i.e., replacing~$C$ by the modified cut
\begin{align}
    C^W(v):=\begin{cases}
    C(v)\qquad&\textrm{ for }\qquad v\not\in W\\
    1-C(v)\qquad&\textrm{otherwise }
    \end{cases}\ .
\end{align}

\subsection{The Feige-Karpinski-Langberg (\fkl) post-processing method}

The main idea of this post-processing step is 
the following observation: If
there are three vertices $c, j, k$ such that one of them (say, $c$) is connected to both the other ones and all three vertices are assigned the same color by the cut $C$, then flipping the value at $c$, i.e., considering $C^{\{c\}}$, will increase the size of the cut, see Figure~\ref{fig:fklmotivation}.

To formalize this, we assume that the set~$V$ of vertices of the graph $G=(V,E)$ is ordered. Without loss of generality, set $V=[n]=\{1\ldots,n\}$. The following definitions will be central:

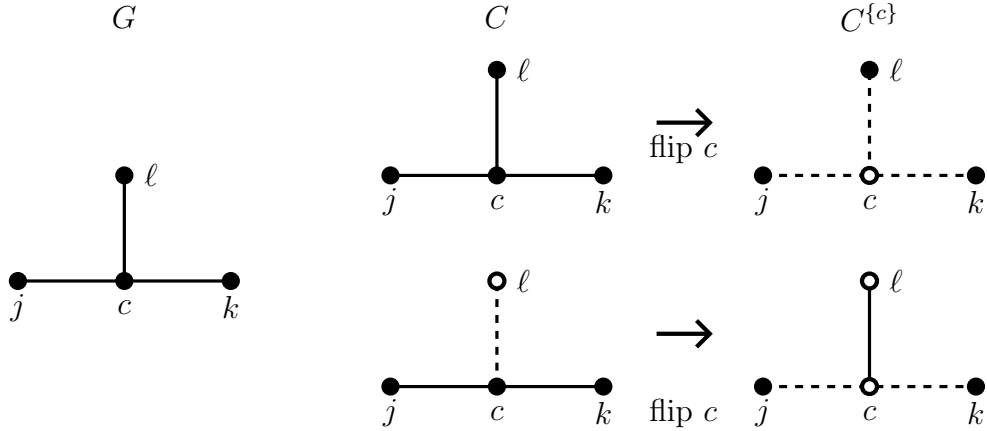
\begin{figure} 
	\centering
	\begin{tikzpicture}[scale=0.7]
	    \node at (-7, 3) {$G$};

	    \draw[very thick] (-9, -2) -- (-5, -2);
	    \draw[very thick] (-7, -2) -- (-7, 0);
	    \draw[very thick, fill] (-9, -2) circle (4pt);
	    \draw[very thick, fill] (-7, -2) circle (4pt);
	    \draw[very thick, fill] (-5, -2) circle (4pt);
	    \draw[very thick, fill] (-7, 0) circle (4pt);
	    
	    \node at (-7, -2.5) {$c$}; 
	    \node at (-9, -2.5) {$j$}; 
	    \node at (-5, -2.5) {$k$}; 
	    \node at (-6.5, 0) {$\ell$};

	    \node at (0, 3) {$C$};
	    \draw[very thick] (-2, 0) -- (2, 0);
	    \draw[very thick] (0, 0) -- (0, 2);
	    \draw[very thick, fill] (-2, 0) circle (4pt);
	    \draw[very thick, fill] (0, 0) circle (4pt);
	    \draw[very thick, fill] (2, 0) circle (4pt);
	    \draw[very thick, fill] (0, 2) circle (4pt);

	    \node at (0, -0.5) {$c$}; 
	    \node at (-2, -0.5) {$j$}; 
	    \node at (2, -0.5) {$k$}; 
	    \node at (0.5, 2) {$\ell$};
	    
	    \node at (3.5, 0.5) {flip $c$};
	    \draw[ultra thick] (3, 1) -- (4, 1);
	    \draw[ultra thick] (3.75, 0.75) -- (4, 1) -- (3.75, 1.25);

	    \draw[very thick, dashed] (5, 0) -- (9, 0);
	    \draw[very thick, dashed] (7, 0) -- (7, 2);
	    
	    \node at (7, 3) {$C^{\{c\}}$};
	    \draw[very thick, fill] (5, 0) circle (4pt);
	    \draw[very thick, white, fill] (7, 0) circle (4pt);
	    \draw[ultra thick] (7, 0) circle (4pt);
	    \draw[very thick, fill] (9, 0) circle (4pt);
	    \draw[very thick, fill] (7, 2) circle (4pt);

	    \node at (7, -0.5) {$c$}; 
	    \node at (5, -0.5) {$j$}; 
	    \node at (9, -0.5) {$k$}; 
	    \node at (7.5, 2) {$\ell$};

	    \draw[very thick] (-2, -4) -- (2, -4);
	    \draw[very thick, dashed] (0, -4) -- (0, -2);
	    \draw[very thick, fill] (-2, -4) circle (4pt);
	    \draw[very thick, fill] (0, -4) circle (4pt);
	    \draw[very thick, fill] (2, -4) circle (4pt);
	    \draw[very thick, white, fill] (0, -2) circle (4pt);
	    \draw[ultra thick] (0, -2) circle (4pt);

	    \node at (0, -4.5) {$c$}; 
	    \node at (-2, -4.5) {$j$}; 
	    \node at (2, -4.5) {$k$}; 
	    \node at (0.5, -2) {$\ell$}; 
	    
	    \node at (3.5, -4.5) {flip $c$};
	    \draw[ultra thick] (3, -3) -- (4, -3);
	    \draw[ultra thick] (3.75, -3.25) -- (4, -3) -- (3.75, -2.75);

	    \draw[very thick, dashed] (5, -4) -- (9, -4);
	    \draw[very thick] (7, -4) -- (7, -2);

	    \draw[very thick, fill] (5, -4) circle (4pt);
	    \draw[very thick, white, fill] (7, -4) circle (4pt);
	    \draw[ultra thick] (7, -4) circle (4pt);
	    \draw[very thick, fill] (9, -4) circle (4pt);
	    \draw[very thick, white, fill] (7, -2) circle (4pt);
	    \draw[ultra thick] (7, -2) circle (4pt);

	    \node at (7, -4.5) {$c$}; 
	    \node at (5, -4.5) {$j$}; 
	    \node at (9, -4.5) {$k$}; 
	    \node at (7.5, -2) {$\ell$}; 
	\end{tikzpicture}
	\caption{The main motivation behind \fkl. On the left, the closed neighborhood of a vertex $c$ is shown. Now assume that we assign a cut $C$ to $G$ and that $(c, j, k)$ is a good triplet for $C$. We distinguish two cases, depending on whether the edge $\{c, \ell\}$ is satisfied (dashed line) or unsatisfied (straight line). Top row: If $\{c, \ell\}$ is unsatisfied, flipping the value of $c$ increases the size of the cut by three (no satisfied edges are destroyed, three satisfied edges are created). Bottom row: If $\{c, \ell\}$ is satisfied, flipping the value of $c$ increases the size of the cut by one (one satisfied edge is destroyed, two satisfied edges are created).}
	\label{fig:fklmotivation}
\end{figure}

\begin{definition}[Triplets]
\begin{enumerate}[(i)]
\item
A three-tuple $(c,j,k)\in V^3$ of pairwise distinct vertices with $j<k$ is called a triplet  if $\{c, j\} \in E$ and $\{c, k\} \in E$. We call the vertex $c$ the central vertex of the triplet. The set of all triplets in~$G$ will be denoted~$T_G$.
\item
Let $C$ be a cut of $G$ and $(c,j,k)\in T_G$.
Then $(c,j,k)$ is call a good triplet for~$C$ if 
\begin{align}
			C(c) = C(j) = C(k) \ .
		\end{align}
		The set of all good triplets for~$C$ will be denoted $\good_G(C)$.
		\item
		Let $C$ be a cut of~$G$, $(c,j,k)\in \good_G(C)$ and $v\in V$. We say that $(c, j, k)$ is destroyed by flipping $v$ if $(c, j, k)$ is not a good triplet for the cut $C^{\{v\}}$. 		
		\end{enumerate}
\end{definition}

We now formulate the post-processing procedure by Feige, Karpinski, and Langberg. While the observations above show that flipping the center of a good triplet $(c, j, k)$ will increase the cutsize, we might get even better results by flipping $j$ or $k$. Furthermore, it is in our interest that the flipping does not destroy too many good triplets. Taking all this into account motivates  the procedure given in Figure~\ref{fig:fklalgorithm}.
\begin{figure}
\begin{mdframed}[
    linecolor=black,
    linewidth=2pt,
    roundcorner=4pt,
    backgroundcolor=gray!15,
    userdefinedwidth=\textwidth,
]
\begin{algorithmic}[1]
        \Function{\fkl}{$3$-regular graph $G=(V, E)$, cut $C$}
            \State $S\leftarrow \good_G \left( C \right)$
            \While{$S\not=\emptyset$}
                \State $(c,j,k)\leftarrow$ triplet $\in S$ that destroys minimal number of good triplets 
                \State $v \leftarrow \arg \max\limits_{\sigma \in \{c, j, k\}} \frac{\cutsize\left(C^{\{\sigma\}}\right) - \cutsize \left( C \right)}{\left| S \setminus \good_G \left( C^{\{\sigma\}} \right) \right|}$ 
                \State $C\leftarrow C^{\{v\}}$ 
                %\State $S \leftarrow S \cap \good_G \left( C^{(\sigma)} \right)$
                \State $S \leftarrow $ triplets in $S$ that are good for $C^{\{v\}}$ %$S \cap \good_G \left( C^{(\sigma)} \right)$
            \EndWhile
            \State \Return C
        \EndFunction
\end{algorithmic}
\end{mdframed}
\caption{The Feige, Karpinski, and Langberg improvement procedure for $3$-regular graphs \cite{FeigeKarpinskiLangberg}.\label{fig:fklalgorithm}}
\end{figure}
The following result is proven in~\cite{FeigeKarpinskiLangberg}.
\begin{lemma}[Lemma~3.2. in \cite{FeigeKarpinskiLangberg}] \label{lem:FKL}
  Let $G$ be a $3$-regular graph and let $C$ be a cut of $G$. Then the cut $C'=\fkl(G,C)$ satisfies
\begin{align}
    \cutsize(C') \geq \cutsize(C) + \frac{1}{3}|\good_G(C)|\ .
\end{align} 
\end{lemma}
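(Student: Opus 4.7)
My plan is to show that at every iteration of the \fkl while loop the cutsize increases by at least $1/3$ of the number of good triplets removed from $S$; since $S$ starts as $\good_G(C)$ and is eventually emptied, summing this inequality over iterations telescopes to the claim. Fix an iteration, let $(c,j,k)\in S$ be the selected triplet, and let $v\in\{c,j,k\}$ be the flipped vertex. Write $\Delta_\sigma:=\cutsize(C^{\{\sigma\}})-\cutsize(C)$ and $d_\sigma:=|\{t\in S:\sigma\in t\}|$. A triplet $t\in S$ is good for $C^{\{v\}}$ iff it does not contain $v$ (flipping $v$ changes only the color of $v$, leaving the goodness of all other triplets intact), so the number of triplets removed from $S$ in the iteration equals $d_v$. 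Since the algorithm picks $v$ to maximize $\Delta_\sigma/d_\sigma$, the task reduces to establishing, for any good triplet $(c,j,k)\in S$, the bound $\max_{\sigma\in\{c,j,k\}}\Delta_\sigma/d_\sigma\geq 1/3$.

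Before the case analysis I record two local identities valid for any vertex $w$ in a $3$-regular graph. Let $s_w$ denote the number of same-color neighbors of $w$ under the current cut. Counting the change of satisfied vs.\ unsatisfied edges incident to $w$ upon flipping $w$ gives $\Delta_w=2s_w-3$. Moreover, any good triplet containing $w$ is either centered at $w$ (there are $\binom{s_w}{2}$ such triplets in $G$) or centered at some same-color neighbor $n$ of $w$ (there are $u_n\in\{0,1,2\}$ such triplets per $n$, one for each same-color neighbor of $n$ other than $w$). Since $S\subseteq\good_G(C)$, this yields the upper bound $d_w\leq\binom{s_w}{2}+\sum_{n}u_n$, with the sum over same-color neighbors $n$ of $w$.

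With these identities in hand, I would perform a case analysis on $s_c\in\{2,3\}$ (which is at least $2$ since $(c,j,k)$ is good and centered at $c$). If $s_c=3$, then $\Delta_c=3$ and $d_c\leq 3+2+2+2=9$, so $\Delta_c/d_c\geq 1/3$. If instead $s_c=2$, let $\ell$ be the third neighbor of $c$; then $C(\ell)\neq C(c)$, so $\Delta_c=1$ and $d_c\leq 1+u_j+u_k$. The subcase $u_j+u_k\leq 2$ still yields $\Delta_c/d_c\geq 1/3$. The only remaining subcase is $u_j+u_k\geq 3$; without loss of generality $u_j=2$, which means all three neighbors of $j$ share $j$'s color, so $s_j=3$ and $\Delta_j=3$. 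The crucial observation here is that because $C(\ell)\neq C(c)$, the contribution of $c$ to the bound on $d_j$ is only $u_c=1$ (counting $k$ but not $\ell$), hence $d_j\leq 3+1+2+2=8$ and $\Delta_j/d_j\geq 3/8>1/3$.

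The main obstacle is precisely this final subcase: flipping the center $c$ could destroy too many good triplets relative to its cut-size gain of only $1$, and one has to argue that an outer vertex of $(c,j,k)$ is a strictly better choice, which forces one to exploit the structural restriction that the off-triplet neighbor $\ell$ of $c$ necessarily has the opposite color. Combining the cases gives $\Delta_v\geq d_v/3$ at each iteration, and summing over the run of the algorithm until $S=\emptyset$ yields $\cutsize(C')-\cutsize(C)\geq\frac{1}{3}|\good_G(C)|$.
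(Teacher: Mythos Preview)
The paper does not prove this lemma; it merely cites it as Lemma~3.2 of~\cite{FeigeKarpinskiLangberg}. So there is no proof in the paper to compare against, and your argument has to be judged on its own merits.

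Your argument is sound. The amortization scheme (show $\Delta_v\geq d_v/3$ at every iteration and sum over the run) is exactly the right structure, and your two local identities $\Delta_w=2s_w-3$ and $d_w\leq\binom{s_w}{2}+\sum_n u_n$ are correct, with the inequality justified by $S\subseteq\good_G(C_{\mathrm{current}})$ at every iteration. The case analysis is complete: $s_c\geq 2$ is forced since $(c,j,k)$ is good and centered at $c$; the case $s_c=3$ gives $3/9$; the case $s_c=2$ with $u_j+u_k\leq 2$ gives $1/3$; and in the residual subcase $u_j=2$ you correctly exploit that the off-triplet neighbor $\ell$ of $c$ has the opposite color to conclude $u'_c=1$ (with respect to $j$), yielding $d_j\leq 8$ and $\Delta_j/d_j\geq 3/8$. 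The identification $|S\setminus\good_G(C^{\{\sigma\}})|=d_\sigma$ is also correct, since a triplet in $S$ ceases to be good after flipping $\sigma$ iff it contains $\sigma$.

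Two small remarks. First, your argument never uses the particular selection rule in line~4 of the \fkl pseudocode (``triplet that destroys the minimal number of good triplets''); the bound $\max_\sigma\Delta_\sigma/d_\sigma\geq 1/3$ holds for every $(c,j,k)\in S$, so the lemma is robust to that choice. Second, it is worth stating explicitly that $d_\sigma\geq 1$ for each $\sigma\in\{c,j,k\}$ (because $(c,j,k)\in S$ itself contains $\sigma$), so the ratios are well-defined; you use this implicitly.
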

Let us exemplify this improvement by using two simple examples with a $3$-regular graph~$G=(V,E)$. Consider first the trivial constant cut~$C_{\textrm{const}}$ which assigns the same color to all vertices. The cutsize of~$C_{\textrm{const}}$ is $0$, hence the approximation ratio vanishes as well, i.e.,
\begin{align}
\frac{\cutsize(C_{\textrm{const}})}{\MC(G)}=0\ .
\end{align}
Now consider the cut $C':=\fkl(G,C_{\textrm{const}})$ obtained by applying the FKL-post-processing procedure to the trivial cut. This cut achieves approximation ratio at least
\begin{align}
    \frac{\cutsize(C')}{\MC(G)}\geq 2/3\ .
    \end{align}
    This can be seen as follows: for a constant cut, every triplet is a good triplet and it is easy to see that $\left| T_G \right| = 2 \left| E \right|$ for a $3$-regular graph. Lemma~\ref{lem:FKL} then implies that the resulting cut $C$ satisfies
    $\cutsize(C') \geq \frac{2}{3} \left| E \right|$ and we obtain the claim with $\MC(G) \leq \left| E \right|$. 

As another example, consider a uniformly random cut~$C_{\textrm{random}}$ of~$G$. For such a cut, the expected 
approximation ratio  is
\begin{align}
    \ExpE\left[\frac{\cutsize(C_{\textrm{random}})}{\MC(G)}\right]=1/2\ .
\end{align}
    Let $C'':=\fkl(G, C_{\textrm{random}})$ be the result of applying the FKL-procedure to~$C_{\textrm{random}}$. Then 
    \begin{align}
    \ExpE\left[\frac{\cutsize(C'')}{\MC(G)}\right]\geq 2/3\ .
        \end{align}
To see this, note that the probability of a fixed triplet being good is equal to $\frac{1}{4}$. By linearity of expectation, we have $\mathbb{E}[|\good_G(C'')|]= \frac{1}{4} \left| T_G \right| = \frac{1}{2} \left| E \right|$. Lemma~\ref{lem:FKL} then implies that the resulting cut $C''$ satisfies $\mathbb{E}[\cutsize(C'')]\geq \left( \frac{1}{2} + \frac{1}{3} \cdot \frac{1}{2}\right) \left| E \right| = \frac{2}{3} \left| E \right| \geq \frac{2}{3} \MC(G)$.

\subsection{The Halperin-Livnat-Zwick (\hlz) post-processing method}
In 2004, Halperin, Livnat, and Zwick \cite{HalperinLivnatZwick} improved upon the algorithm of~\cite{FeigeKarpinskiLangberg}, giving an algorithm for \maxcut achieving an expected (provable) approximation ratio of at least~$0.9326$  on graphs with vertex degree at most~$3$. To the best of our knowledge\footnote{There is supposedly a slightly improved algorithm in Doror Livnat's M.Sc. thesis having an approximation ratio $0.9328$ \cite{HalperinLivnatZwick}.}, this is the best currently known efficient classical algorithm. Although their algorithm works for graphs of maximum degree 3, we will discuss a restricted and thus simpler version for triangle-free $3$-regular graphs. Unlike the \fkl-post-processing this method employs more non-local improvement procedure. The main point here is to illustrate the use of another post-processing method in the construction of twisted hybrid algorithms. We will refer to this procedure simply as  \hlz-post-processing.

Given a cut~$C$ of a triangle-free graph~$G$, this post-processing method proceeds as specified in Figure~\ref{fig:hlzalgorithm}. 
Specializing the results of~\cite{HalperinLivnatZwick} to the triangle-free case considered here gives the following statement:
\begin{figure}
\begin{mdframed}[
    linecolor=black,
    linewidth=2pt,
    roundcorner=4pt,
    backgroundcolor=gray!15,
    userdefinedwidth=\textwidth,
]
\begin{algorithmic}[1]
        \Function{\hlz}{triangle-free $3$-regular graph $G=(V, E)$, cut $C$}
            \State $V_3\leftarrow\text{vertices in $V$ with 3 unsatisfied edges by cut } C$
            \State $V_2\leftarrow\text{vertices in $V$ with 2 unsatisfied edges by cut } C$
            \While{$V_3\cup V_2\not=\emptyset$}
                \If{$V_3\not=\emptyset$}
                    \State $v\leftarrow$ vertex in $V_3$ with the smallest number of neighbours in $V_3$
                    \State $C\leftarrow C^{\{v\}}$ %\Comment{Move the vertex $v$ to the other side of the cut}
                \ElsIf{$V_2\not=\emptyset$}
                    \State $v\leftarrow$ vertex in $V_2$
                    \State $\{v_1,\ldots,v_k\}\leftarrow$ the longest path or cycle in $G[V_2]$ containing $v$
                    \State $M\leftarrow \{v_i\in\{v_1,\ldots,v_k\}\mid i \text{ is odd}\}$
                    \State $C\leftarrow C^{M}$
                    %\If{$\exists$ a path $\{u, v_1, \hdots, v_k, w\}\in G$ or a cycle $\{v_1, \hdots, v_k\}\in G$,\\\hspace{5.3em} such that $v_1,\ldots,v_k \in V_2$ and $u,w\not\in V_2$}
                     %   \State $M\leftarrow \{v_i\in\{v_1,v_2,\ldots,v_k\}\mid i \text{ is odd}\}$
                     %   \State $c\leftarrow c^{(M)}$
                    %\EndIf
                    %\State $p\leftarrow$ path $\{u, v_1, \hdots, v_k, w\}$ 
                \EndIf
                %\LongComment{2.75em}{Update $V_3$ and $V_2$}measures
                \State $V_3\leftarrow\text{vertices in $V$ with 3 unsatisfied edges by cut } C$
                \State $V_2\leftarrow\text{vertices in $V$ with 2 unsatisfied edges by cut } C$
            \EndWhile
            \State \textbf{return} $C$ 
        \EndFunction
\end{algorithmic}
\end{mdframed}
\caption{The Halperin, Livnat, and Zwick improvement procedure simplified to $3$-regular triangle free graphs.\label{fig:hlzalgorithm}}
\end{figure}

\begin{lemma}[Lemma 3.1. in \cite{HalperinLivnatZwick}]\label{lem:HLZ31triangleFree} Let $G$ be a $3$-regular triangle-free graph, $C$ be a cut of $G$ and $V_2$ and $V_3$ be the sets of vertices with 2 and 3 unsatisfied edges adjacent to them in the cut $C$. Then the cut $C' = \hlz(G, C)$ satisfies\footnote{Note that there is a typo in the Lemma 3.1. \cite{HalperinLivnatZwick}.}
\begin{align}
    \cutsize(C') \geq \cutsize(C) + \frac{2}{5} |V_2| + \frac{17}{15}|V_3| \ .
\end{align}
\end{lemma}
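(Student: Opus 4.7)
The plan is an amortized argument using the Lyapunov-type potential
$$\Phi(C) := \cutsize(C) + \tfrac{2}{5}\,|V_2(C)| + \tfrac{17}{15}\,|V_3(C)|.$$
Since the algorithm terminates only once $V_2(C') = V_3(C') = \emptyset$, we have $\Phi(C') = \cutsize(C')$ at termination, so it suffices to show that $\Phi$ is non-decreasing along the execution (possibly after grouping consecutive $V_3$-flips). This directly yields
$\cutsize(C') = \Phi(C') \geq \Phi(C) = \cutsize(C) + \tfrac{2}{5}|V_2(C)| + \tfrac{17}{15}|V_3(C)|$, which is the claim.

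For an Operation-1 step (flipping a vertex $v \in V_3$ chosen with minimum $n_3(v)$), I would do direct bookkeeping: let $a, b, c$ be the numbers of neighbors of $v$ in $V_3, V_2, V_1$ (so $a+b+c = 3$). One checks that $\Delta\cutsize = 3$, $\Delta|V_3| = -(1+a)$, $\Delta|V_2| = a - b$, hence
$$\Delta\Phi = \frac{28 - 11a - 6b}{15}.$$
This is non-negative for $a \le 2$, so the only problematic case is $a = 3$, where $\Delta\Phi = -\tfrac{1}{3}$. The minimum-$n_3$ selection rule forces this bad case to occur only when every current $V_3$-vertex has $n_3 = 3$, which by $3$-regularity of $G$ means $V_3$ contains an entire connected component $K$ of $G$. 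Triangle-freeness is then used to argue that once the first flip inside $K$ moves the three neighbors of $v$ into $V_2$, every remaining $V_3$-vertex in $K$ has at least one neighbor in $V_2$ and hence $n_3 \le 2$; subsequent flips inside $K$ therefore all have $\Delta\Phi \geq 0$, and a component-level accounting (already positive in the smallest instance $K = K_{3,3}$) shows the cumulative $\Delta\Phi$ over clearing $K$ is non-negative.

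For an Operation-2 step (run only when $V_3 = \emptyset$), one flips the odd-indexed set $M$ along a maximal path or cycle $P = v_1, \ldots, v_k$ in $G[V_2]$. I would partition the edges incident to $M$ into (i) internal path/cycle edges, (ii) edges to vertices in $V_2 \setminus V(P)$, and (iii) edges to $V_1 \cup V_0$. Using the maximality of $P$ (which controls the $V_2$-neighbors of the endpoints in the path case) together with the fact that every $v_i \in V_2$ has exactly one satisfied incident edge and two unsatisfied ones, one can bound $\Delta\cutsize$ from below and bound $|\Delta|V_2||$ from above by a multiple of $|M|$, giving $\Delta\Phi \geq 0$. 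The main obstacle I anticipate is the $a = 3$ amortization of Operation 1: a per-step bound loses $\tfrac{1}{3}$ per flip, so one must genuinely exploit the triangle-free $3$-regular structure of the underlying component to phrase the bound at the component level; getting the combinatorial accounting to be tight is what pins down the specific constants $\tfrac{2}{5}$ and $\tfrac{17}{15}$.
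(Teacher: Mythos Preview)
The paper does not give its own proof of this lemma; it is quoted (specialized to the triangle-free case) from Halperin--Livnat--Zwick and simply cited. Your potential-function approach with
$\Phi(C) = \cutsize(C) + \tfrac{2}{5}|V_2| + \tfrac{17}{15}|V_3|$
is indeed the amortization scheme used in that reference, and your Operation-1 bookkeeping ($\Delta\Phi = (28-11a-6b)/15$) is correct.

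There is, however, a concrete error in your treatment of the $a=3$ case. You assert that after the first flip of $v$ inside a full-$V_3$ component $K$, ``every remaining $V_3$-vertex in $K$ has at least one neighbour in $V_2$.'' This is false whenever $K$ has diameter at least $3$: any vertex $w$ at distance $\ge 3$ from $v$ has all three neighbours at distance $\ge 2$ from $v$, hence still in $V_3$, so $n_3(w)=3$. The statement you need (and which does hold) is merely that \emph{some} $V_3$-vertex has $n_3\le 2$ after the first flip---e.g.\ any distance-$2$ vertex from $v$, which exists by triangle-freeness. More generally, Operation~1 only decreases unsatisfied degrees, so $V_3\cap K$ is monotone decreasing; since $K$ is connected and $\emptyset\neq V_3\cap K\subsetneq K$ after the first flip, the boundary of $V_3\cap K$ in $K$ is nonempty, and the minimum-$n_3$ rule therefore ensures $a\le 2$ for every subsequent flip in that phase. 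You then still owe the compensation of the single $-\tfrac13$ deficit: one clean way is to observe that the \emph{last} Operation-1 flip in $K$ must have $a\le 1$ (triangle-freeness together with the selection rule forbids $a\ge 2$ when $|V_3\cap K|\le 3$), giving $\Delta\Phi\ge 5/15=\tfrac13$, which exactly repays the loss.

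Your Operation-2 sketch is too schematic to verify; in particular, flipping along the odd-indexed set can turn the previously satisfied edge of a $V_2$-vertex into an unsatisfied one, so you must check both that $\Delta\Phi\ge 0$ in that accounting and that any $V_3$-vertices thereby created cannot reconstitute a full-$V_3$ component (otherwise the $a=3$ amortization would have to be redone for the next Operation-1 phase).
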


Again, let us get a feel for the impact of the procedure like we did for \fkl in certain simple scenarios, this time for a triangle-free $3$-regular graph $G = (V, E)$. Once again, consider first the trivial constant cut~$C_{\textrm{const}}$ which assigns the same color to all vertices and therefore has cutsize $0$, so the approximation ratio is $0$ as well. Considering $C' := \hlz \left( G, C_{\textrm{const}} \right)$, i.e., the cut obtained by applying the \hlz-post-processing procedure, this cut achieves an approximation ratio of at least 
\begin{align} \label{eq:hlzconstant}
    \frac{\cutsize(C')}{\MC(G)} \geq 0.7555 \ .
\end{align}
To see this, note that for a constant cut, all vertices belong to $V_3 = V$ and none to $V_2 = \emptyset$. Lemma~\ref{lem:HLZ31triangleFree} implies that $\cutsize(C') \geq \frac{17}{15} \left| V \right|$ and using that $\left| E \right| = 3 / 2 \left| V \right|\geq \MC(G)$, we obtain $\frac{\cutsize(C')}{\MC(G)}\geq\frac{\cutsize(C')}{|E|}\geq\frac{17\cdot2}{15\cdot3} \approx 0.7555$.

As another example, consider a uniformly random cut $C_\textrm{random}$ of $G$. For such a cut, the expected approximation ratio is $\frac{1}{2}$, i.e., $\ExpE \left[ \cutsize(C) \right] = \frac{1}{2} \left| E \right|$. Considering the cut $C'' := \hlz(G, C)$, the approximation ratio of this cut is 
\begin{align}
    \ExpE\left[\frac{\cutsize(C'')}{\MC(G)}\right] \geq 0.6611
\end{align}
which can be seen as follows: the probability of a vertex being in $V_3$ and $V_2$ are $2^{-3}$ and $2^{-2}$, respectively. By linearity of expectation, we have $\ExpE \left[ \left| V_3 \right| \right] = 2^{-3} \left| V \right|$ and $\ExpE \left[ \left| V_2 \right| \right] = 2^{-2} \left| V \right|$. Lemma~\ref{lem:HLZ31triangleFree} implies that $\mathbb{E}[\cutsize(C'')]\geq \frac{|E|}{2} +\frac{2}{5\cdot 4}|V|+\frac{17}{15\cdot 8}|V|$. Using that $\left| V \right| = \frac{2}{3} \left| E \right|$, we see that the approximation ratio is lower-bounded by $\frac{1}{2}+\frac{29}{180}\approx 0.6611$ in expectation value.

\section{Quantum approximate optimization and \maxcut\label{sec:qaoamaxcut}}
Here we briefly state the relevant definition for \qaoa applied to the \maxcut problem.   In Section~\ref{sec:localityuniformity}, we then discuss basic features of~\qaoa that we exploit to find lower bounds on approximation ratios. 

\subsection[Definition of the MaxCut Hamiltonian and QAOAp]{Definition of the \maxcut Hamiltonian and \qaoap{p}}
 Recall that the \maxcut problem Hamiltonian for a graph~$G=(V,E)$ is given by 
\begin{align}
    H_G = \frac{1}{2} \sum\limits_{\{u, v\} \in E} \left( I - Z_u Z_v \right)\ \label{eq:maxcuthamiltonian}
\end{align}
where a single qubit is associated with each vertex $u\in V$.
Measurement of a state~$\Psi\in(\mathbb{C}^2)^{\otimes \left| V \right|}$ in the computational basis yields a string~$C\in \{0,1\}^{\left| V \right|}$ specifying a cut~$C$ of expected size
\begin{align}
\bra{\Psi}H_G \ket{\Psi}&=\mathbb{E}\left[ \cutsize(C) \right]\ .\label{eq:maxcutexpectpsi}
\end{align}

The variational family used in \qaoa is specified by a natural number~$p$ called the level of \qaoa. For a given graph~$G=(V,E)$, the level-$p$ variational state with parameters~$(\beta,\gamma)\in [0,2\pi)^{p}\times [0,2\pi)^{p}$  is
    \begin{align}
      \ket{\psi_G( \beta , \gamma )} =
U_G(\beta,\gamma)\ket{+^{|V|}} \label{eq:qaoapwavefunction}
      \end{align}
    where $\ket{+}=\frac{1}{\sqrt{2}}(\ket{0}+\ket{1})$,  $\ket{+^{|V|}}:=\ket{+}^{\otimes |V|}$ and where
    \begin{align}
      U_G(\beta,\gamma):=\prod\limits_{m=1}^p \left[ \exp \left( -i \beta_m \sum\limits_{u\in V} X_u \right) \exp \left( -i \gamma_m H_G \right) \right]\
    \end{align}
    is the QAOA unitary. In the following, we analyze the performance of twisted algorithms derived from \qaoap{p}. 
    
\subsection{Locality and uniformity of QAOA\label{sec:localityuniformity}}
The analysis of \qaoa typically exploits its locality and uniformity, see e.g.,~\cite{qaoaOrigPaper, WurtzLove, weggemans}. Similar arguments apply to our modified versions of \qaoa. Here we state these properties in a form that will be used below to establish lower bounds on the achieved approximation ratios.

{\bf Locality of \qaoa.}    One of the defining features of this ansatz is its locality: The reduced density operator of~$\psi_G( \beta , \gamma )$ on some subset~$S\subset [n]$ of qubits is uniquely determined by~$(\beta,\gamma)$ and the ``$p$-environment''  of~$S$, a certain subgraph of~$G$. For the following analysis, it will be convenient to express this dependence in a more detailed form.

    Let $A$ be a local operator supported on a subset~$\supp(A)\subset [n]$ of qubits. Conjugation of~$A$
    by an operator of the form~$\exp(-i\beta_mX_u)$ does not change the support of~$A$ and leaves the operator invariant unless $u\in\supp(A)$. Similarly, conjugation of~$A$ by an operator of the form~$\exp(i\gamma_m Z_uZ_v)$ leaves $A$ invariant unless $\{u,v\}\cap \supp(A)\neq \emptyset$, in which case the support generically becomes~$\{u,v\}\cup \supp(A)$. Applying this reasoning iteratively shows the following: Conjugating~$A$ by the QAOA unitary~$U_G(\beta,\gamma)$  is equivalent to conjugation by a cost function unitary~$U_{G^{(p)}[\supp(A)]}(\beta,\gamma)$ associated with a subgraph~$G^{(p)}[\supp(A)]$ of~$G$. The latter is defined as follows, for any fixed subset $S\subset V$ vertices corresponding to the support of~$A$. A length-$\ell$ path starting in~$S$ is a sequence $(u_0,\ldots,u_\ell)$ of vertices such that $u_0\in S$ and $\{u_{j-1},u_{j}\}\in E$ for all $j=1,\ldots,\ell$. The subgraph~$G^{(p)}[S]$ of~$G$ is the result of taking the union of all paths of length at most~$p$ starting in~$S$. We call~$G^{(p)}[S]$ the $p$-environment of~$S$. Succinctly, this shows that
    \begin{align}
      \bra{\psi_{G}(\beta,\gamma)}A\ket{\psi_{G}(\beta,\gamma)}&=
      \bra{\psi_{G^{(p)}[\supp(A)]}(\beta,\gamma)}A\ket{\psi_{G^{(p)}[\supp(A)]}(\beta,\gamma)}\ .\label{eq:localexpressionpneighborhood}
      \end{align}
In other words, to evaluate the expectation of~$A$, it suffices to consider the \qaoa-state associated with the $p$-environment of the support of~$A$.

{\bf Uniformity of \qaoa.}
For a generic local operator~$A$ with support~$S=\supp(A)$, the quantity~$\bra{\psi_{G^{(p)}[S]}(\beta,\gamma)}A\ket{\psi_{G^{(p)}[S]}(\beta,\gamma)}$ depends on the underlying graph~$G$  only through the $p$-environment $G^{(p)}[S]$ of~$S$ and the subgraph~$G[S]$ of~$G$ induced by~$S$. In fact, for a fixed induced subgraph~$K:=G[S]$, only the equivalence class of the $p$-environment~$G^{(p)}[S]$ matters. Here two graphs $G_1$ and $G_2$ (that both contain $K$ as a subgraph) are called equivalent if and only if they are isomorphic with an isomorphism fixing~$K$. This property of \qaoa is an immediate consequence of its definition.

This motivates considering equivalence classes of $p$-environments associated with a graph~$\tilde{G}$. We denote this set by $\cE^{(p)}(\tilde{G})$ and call this the set of $p$-environments of~$\tilde{G}$. Modulo isomorphisms fixing~$\tilde{G}$, every element of $\cE^{(p)}(\tilde{G})$ is a graph that appears as a $p$-environment~$G^{(p)}[S]$
for a graph~$G$, where $S$ is a subset of vertices of~$G$ with the property that the induced subgraph is~$\tilde{G}=G[S]$. We will use individual representatives of each equivalence class to denote elements of~$\cE^{(p)}(\tilde{G})$.
For example, the set $\cE^{(1)} \big( \tripletMinigraph \big)$ is depicted in Figure~\ref{fig:tripletp1Environments}. These observations allow to reorganize expectation values that are uniform. For example, 
\begin{align}
  \bra{\psi_G(\beta,\gamma)}\left(\sum_{\{u,v\}\in E} Z_uZ_v\right)\ket{\psi_G(\beta,\gamma)}&=\sum_{\tilde{G}\in \cE^{(p)}
\big(\edgeMinigraphScript\big)}n_G(\tilde{G}) \bra{\psi_{\tilde{G}}(\beta,\gamma)}Z_1Z_2\ket{\psi_{\tilde{G}}(\beta,\gamma)}\ , \label{eq:ncatrepresent}
  \end{align}
where $n_G(\tilde{G})$ is the number of times the $p$-environment~$\tilde{G}$ appears in~$G$.

Of special interest to us will be so-called $p$-trees. Given a graph $\tilde{G}$ and $p \in \mathbb{N}$, $T^{(p)}\big( \tilde{G} \big)$ is defined as the sole tree in $\mathcal{E}^{(p)} \big( \tilde{G} \big)$, see Figures~\ref{fig:triplettrees} and \ref{fig:startrees} for examples.

\section{Twisted variational hybrid algorithms for \maxcut \label{sec:twistedvariationalalg}}
In this section, we define our twisted algorithm~$\cA^+$ given a hybrid algorithm~$\cA$. We first show in Section~\ref{sec:liftingguarantees}  that the effect of classical post-processing can be quantified in terms of the expectation value of a modified problem Hamiltonian. We then give the definition of the twisted algorithm~$\cA^+$
in Section~\ref{sec:deftwisteda}.

\subsection{Lifting performance guarantees to hybrid algorithms\label{sec:liftingguarantees}}
Lemmas~\ref{lem:FKL} and~\ref{lem:HLZ31triangleFree} provide performance guarantees for the improvement obtained by applying the (classical) \fkl- and the \hlz-algorithm to any cut~$C$. Here we show that these results easily translate to the context of hybrid algorithms.

Concretely, consider a graph~$G=(V,E)$ with $V=[n]$ and a variational ansatz state~$\Psi\in (\mathbb{C}^2)^{\otimes n}$. Measuring $\Psi$ in the computational basis provides a cut~$C\in \{0,1\}^n$ to which we can apply either the \fkl or the \hlz procedure.

Let us first consider the simpler case of \fkl, i.e., suppose that $C'=\fkl(G,C)$ is the cut obtained by applying the \fkl-post-processing to the cut~$C$. To make Lemma~\ref{lem:FKL} applicable to this setting,  we need an operator that accounts for good triplets. Such an operator  is 
\begin{align} 
  N_G:=\sum_{(c,j,k)\in T_G} \Pi
  _{c,j,k} \ , \qquad \text{where}\qquad    \Pi_{c,j,k}:=\left(\ketbra{000}{000}+\ketbra{111}{111}\right)_{c,j,k} \label{eq:goodTripletOperator}
\end{align}
with $T_G$ denoting the set of triplets in $G$. Observe that $\Pi_{c,j,k}$ is a projector onto the subspace spanned by computational basis states~$\ket{C}$ describing a
cut~$C\in \{0,1\}^n$ such that $(c, j, k)$ is a good triplet in~$C$. This implies that the expectation $\bra{\Psi}N_G\ket{\Psi}$ of $N_G$ in a state~$\Psi$ is equal to the expected number of triplets in a cut~$C$ obtained by measuring~$\Psi$ in the computational basis, i.e.,
\begin{align}
\bra{\Psi} N_G\ket{\Psi}&=\sum_{C\in \{0,1\}^n} |\langle C|\Psi\rangle|^2\cdot |\good_G(C)|=\ExpE\left[|\good_G(C)|\right]\ .\label{eq:expectedgoodsizemeasurement}
\end{align}
Correspondingly, we call $N_G$ the good triplet number operator.

Combining~\eqref{eq:expectedgoodsizemeasurement} with~\eqref{eq:maxcutexpectpsi}, we obtain the following ``quantum version'' of Lemma~\ref{lem:FKL}:
\begin{lemma}\label{lem:quantumFKL}
  Let $G=(V,E)$ be a $3$-regular graph with $V=[n]$ and $\Psi\in (\mathbb{C}^2)^{\otimes n}$. Let $C\in \{0,1\}^n$ be the result of measuring~$\Psi$ in the computational basis and $C':=\fkl(G,C)$. Then
  \begin{align}
    \mathbb{E}\left[ \cutsize(C') \right]&= \bra{\Psi} (H_G+\frac{1}{3}N_G)\ket{\Psi}\ .
  \end{align}
\end{lemma}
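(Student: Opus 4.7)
The plan is to pass from the classical FKL guarantee (Lemma~\ref{lem:FKL}), which compares cut sizes of two deterministic cuts $C$ and $\fkl(G,C)$, to the quantum setting by averaging over the Born distribution of the computational-basis measurement, and then to identify the resulting classical expectations with expectation values of the Hermitian operators $H_G$ and $N_G$ already available in the excerpt. The equality appearing in the statement will require a careful interpretation, since Lemma~\ref{lem:FKL} is a pointwise inequality; I expect this to be the only delicate point in the argument.

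The first step is to apply Lemma~\ref{lem:FKL} conditionally on each measurement outcome $C\in\{0,1\}^n$: the cut $C'=\fkl(G,C)$ satisfies $\cutsize(C')\geq \cutsize(C)+\tfrac{1}{3}|\good_G(C)|$. Since the measurement outcome is distributed according to $p(C)=|\langle C|\Psi\rangle|^2$, taking expectations and using linearity yields
\[
\mathbb{E}[\cutsize(C')]\;\geq\;\mathbb{E}[\cutsize(C)]+\tfrac{1}{3}\mathbb{E}[|\good_G(C)|].
\]
The second step is to translate the two classical expectations into operator expectation values in $\Psi$. By the defining relation~\eqref{eq:maxcutexpectpsi} of the Max-Cut Hamiltonian, $\mathbb{E}[\cutsize(C)]=\bra{\Psi}H_G\ket{\Psi}$. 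For the good-triplet count, each projector $\Pi_{c,j,k}$ in the definition~\eqref{eq:goodTripletOperator} of $N_G$ is diagonal in the computational basis and acts on $\ket{C}$ as the indicator of the event $\{(c,j,k)\in\good_G(C)\}$; summing over all triplets $(c,j,k)\in T_G$ gives exactly~\eqref{eq:expectedgoodsizemeasurement}, $\bra{\Psi}N_G\ket{\Psi}=\mathbb{E}[|\good_G(C)|]$. Plugging both identities into the previous inequality reproduces the right-hand side $\bra{\Psi}(H_G+\tfrac{1}{3}N_G)\ket{\Psi}$ of the claimed formula.

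The remaining issue, and the main conceptual obstacle, is the equality sign in the statement: the chain above only produces \textbf{$\geq$}, because Lemma~\ref{lem:FKL} is not tight pointwise and the post-processing will in general achieve a strictly larger cut size than its certified lower bound. I would resolve this by reading the lemma as the identification of a specific local Hermitian operator $H_G+\tfrac{1}{3}N_G$ whose expectation coincides with the FKL-certified value of $\mathbb{E}[\cutsize(C')]$, namely $\mathbb{E}[\cutsize(C)]+\tfrac{1}{3}\mathbb{E}[|\good_G(C)|]$, which is precisely the quantity that the twisted algorithm $\cA^+$ of Section~\ref{sec:deftwisteda} subsequently optimizes variationally. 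Under this reading the equality is tautological once the two operator identifications $\mathbb{E}[\cutsize(C)]=\bra{\Psi}H_G\ket{\Psi}$ and $\mathbb{E}[|\good_G(C)|]=\bra{\Psi}N_G\ket{\Psi}$ are in place, while Lemma~\ref{lem:FKL} guarantees that this common value is indeed a valid lower bound on the true expected cut size $\mathbb{E}[\cutsize(C')]$ produced by the post-processed algorithm. The main work to carry out carefully is therefore the operator/indicator identification for $N_G$; the FKL step itself is invoked as a black box.
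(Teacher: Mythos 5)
Your proposal follows exactly the paper's route: the paper obtains the lemma by combining the pointwise classical guarantee of Lemma~\ref{lem:FKL} with the same two operator identifications $\mathbb{E}\left[\cutsize(C)\right]=\bra{\Psi}H_G\ket{\Psi}$ (Eq.~\eqref{eq:maxcutexpectpsi}) and $\mathbb{E}\left[|\good_G(C)|\right]=\bra{\Psi}N_G\ket{\Psi}$ (Eq.~\eqref{eq:expectedgoodsizemeasurement}), with no further argument. You are also right to flag the equality sign: since Lemma~\ref{lem:FKL} is only a lower bound on $\cutsize(C')$, this derivation yields ``$\geq$'' rather than ``$=$'' --- that is an imprecision in the paper's statement rather than a gap in your argument, and the inequality is all that is needed for the downstream lower bounds on approximation ratios.
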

This lemma shows that the ``target Hamiltonian'' $H_G$  should be modified by introducing the improvement operator
\begin{align}\label{eq:impOperFKL}
    \Delta_G^{\fkl}:=\frac{1}{3}N_G \ .
\end{align}

A similar treatment applies to the \hlz-procedure. Suppose that $C' = \hlz(G, C)$ is the cut obtained by applying the \hlz-post-processing to the cut $C$. We now want to ``quantify'' Lemma~\ref{lem:HLZ31triangleFree} and therefore need two operators that account for the number of vertices with $2$ and $3$ unsatisfied edges adjacent to them, respectively. For this purpose, define
\begin{align}
    M_G^{(2)} = \sum\limits_{c \in V} \Pi^{(2)}_{c,A(c)} \qquad \text{ and } \qquad 
    M_G^{(3)} = \sum\limits_{c \in V} \Pi^{(3)}_{c,A(c)} \quad ,
\end{align}
where 
\begin{align}
    \Pi^{(2)}_{c,A(c)} :=& \sum_{b\in\{0,1\}} \ketbra{b}{b}_c \otimes P^{(b)}_{A(c)} \quad \text{ with } \quad P^{(b)}_{A(c)}:=\!\!\! \sum_{\substack{\{x,y,z\}\in \{0,1\}^3,\\ b\oplus x+b\oplus y+b\oplus z = 1}}\!\!\!\ketbra{xyz}{xyz}_{A(c)} \quad \text{ and } \\
    \Pi^{(3)}_{c,A(c)} :=&\big(\ketbra{0000}{0000}+\ketbra{1111}{1111}\big)_{\overline{A}(c)}
\end{align}
with $A(c)$ being the ordered $3$-tuple of neighbors of $c \in V$ and $\overline{A}(c)$ denoting the closed neighbourhood $\overline{A}(c) := (c, A(c)_1, A(c)_2, A(c)_3)$. Observe that $P^{(b)}_{A(c)}$ is a projector onto the sum of computational basis states that contain exactly two bits equal to $b$. Furthermore, $\Pi^{(2)}_{c,A(c)}$ is a projector onto the subspace spanned by computational basis states which are associated with exactly $2$ unsatisfied edges adjacent to $c$. Similarly, $\Pi^{(3)}_{c,A(c)}$ is a projector onto the subspace spanned by computational basis states which are associated with exactly $3$ unsatisfied edges adjacent to $c$. By abuse of notation, we use $\Pi^{(2)}_{c}$ and $\Pi^{(3)}_{c}$ whenever the graph is known from the context.

Using the same reasoning as for Lemma~\ref{lem:quantumFKL}, we obtain the following:

\begin{lemma}\label{lem:quantumHLZ}
  Let $G=(V,E)$ be a $3$-regular triangle-free graph with $V=[n]$ and $\Psi\in (\mathbb{C}^2)^{\otimes n}$. Let $C\in \{0,1\}^n$ be the result of measuring~$\Psi$ in the computational basis and $C':=\hlz(G,C)$. Then
  \begin{align}
    \mathbb{E}\left[ \cutsize(C') \right]&= \bra{\Psi} (H_G+\frac{2}{5} M_G^{(2)} + \frac{17}{15} M_G^{(3)})\ket{\Psi}\ .
  \end{align}
\end{lemma}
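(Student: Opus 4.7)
The argument is the exact analogue of the one used for Lemma~\ref{lem:quantumFKL}. Its backbone is (a) Born's rule for the distribution over cuts produced by the measurement, (b) the vertex-count inequality from Lemma~\ref{lem:HLZ31triangleFree} applied pointwise to each measurement outcome, and (c) linearity of expectation combined with the explicit projector forms of $M_G^{(2)}$ and $M_G^{(3)}$. As in the corresponding FKL case (Lemma~\ref{lem:quantumFKL}), the relation actually produced by this chain of reasoning is a lower bound, reflecting the fact that Lemma~\ref{lem:HLZ31triangleFree} is itself an inequality; I read the ``$=$'' in the statement in the same spirit as there, matching the convention already established for the FKL analogue.

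Concretely, I first fix notation: writing $p(C) := |\braket{C}{\Psi}|^2$ for the Born distribution, and for each $C \in \{0,1\}^n$ denoting by $V_2(C), V_3(C) \subseteq V$ the vertex sets incident to exactly $2$ and $3$ unsatisfied edges under $C$. Applying Lemma~\ref{lem:HLZ31triangleFree} to $(G,C)$ pointwise for each outcome and then averaging against $p(C)$ using linearity yields
\begin{align}
\mathbb{E}[\cutsize(C')] \;\geq\; \mathbb{E}[\cutsize(C)] + \tfrac{2}{5}\,\mathbb{E}[|V_2(C)|] + \tfrac{17}{15}\,\mathbb{E}[|V_3(C)|]\ .
\end{align}

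The main work is then to recognize each of the three expectations as a quantum expectation value in $\Psi$. The first, $\mathbb{E}[\cutsize(C)] = \bra{\Psi}H_G\ket{\Psi}$, is precisely~\eqref{eq:maxcutexpectpsi}. For the second, I check that $\Pi^{(3)}_{c,A(c)}$ is the computational-basis projector onto configurations in which $c$ and its three neighbors share a common bit value, i.e., the indicator of $c \in V_3(C)$; summing over $c$ and taking the Born average then gives $\mathbb{E}[|V_3(C)|] = \bra{\Psi}M_G^{(3)}\ket{\Psi}$. For the third, I analogously unfold $\Pi^{(2)}_{c,A(c)}$: conditional on $c$ taking value $b$, the factor $P^{(b)}_{A(c)}$ restricts the three neighbors to configurations with exactly one $\overline{b}$ and two $b$'s, i.e., exactly two incident edges whose endpoints agree with $c$ and are therefore unsatisfied; hence $\Pi^{(2)}_{c,A(c)}$ is the indicator of $c \in V_2(C)$ and $\mathbb{E}[|V_2(C)|] = \bra{\Psi}M_G^{(2)}\ket{\Psi}$. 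Substituting these three identities into the averaged inequality delivers the statement.

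The only point requiring some care is the combinatorial reading of $P^{(b)}_{A(c)}$: the defining condition $b\oplus x + b\oplus y + b\oplus z = 1$ counts neighbors \emph{disagreeing} with $c$, and thus selects states with one satisfied and two unsatisfied incident edges, producing the $V_2$ count rather than its complement. Everything else is linearity of expectation and the diagonality of the computational-basis projectors in~\eqref{eq:maxcutexpectpsi}-style calculations, mirroring the FKL case.
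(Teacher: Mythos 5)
Your proof is correct and follows essentially the same route as the paper, which simply invokes ``the same reasoning as for Lemma~\ref{lem:quantumFKL}'': apply Lemma~\ref{lem:HLZ31triangleFree} pointwise to each measurement outcome, average via linearity of expectation, and identify $\bra{\Psi}H_G\ket{\Psi}$, $\bra{\Psi}M_G^{(2)}\ket{\Psi}$, $\bra{\Psi}M_G^{(3)}\ket{\Psi}$ as the expected cut size and expected sizes of $V_2(C)$, $V_3(C)$ using the diagonal projector structure. You are also right that the chain of reasoning only yields a lower bound ``$\geq$'' (since Lemma~\ref{lem:HLZ31triangleFree} is an inequality), and your reading of the stated ``$=$'' as a $\geq$ matches the only interpretation consistent with the paper's subsequent use of these lemmas.
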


Therefore, $H_G$ should be modified by introducing the improvement operator 
\begin{align} \label{eq:impOperHLZ}
    \Delta_G^{\hlz} := \frac{2}{5} M_G^{(2)} + \frac{17}{15} M_G^{(3)} \ .
\end{align}

\subsection[Definition of the twisted algorithm A+]{Definition of the twisted algorithm~$\cA^+$\label{sec:deftwisteda}}
Here we present our modified variational algorithm~$\cA^+$ which we call twisted-$\cA$. We formalize a variational quantum algorithm~$\cA$ as follows: It is given by a family of states 
\begin{align}
    \cA = \left\{  \Psi_x \left( \theta \right) \right\}_{\theta \in \Theta} \ ,
\end{align}
where $x$ is an input to the algorithm, i.e., a problem instance  and $\Theta \subset \mathbb{R}^k$ for some $k \in \mathbb{N}$. Once one has chosen $\theta$, the state~$\Psi_x \left( \theta \right)$ is measured to obtain the output of the algorithm. 

In the case of the \maxcut problem, a problem instance is given by a graph~$G$. A good hybrid algorithm for this problem specifies a variational family~$\{\Psi_G(\theta)\}_{\theta\in\Theta}$ whose elements can be efficiently prepared (e.g., by a low-depth circuit) 
and which -- ideally -- contains elements with large energy (corresponding to the expected cut size) with respect to the \maxcut problem Hamiltonian $H_G$ (see Eq.~\eqref{eq:maxcuthamiltonian}).  Given such an algorithm~$\cA$, we obtain a  twisted algorithm~$\post$-$\cA^+$ 
by the following modifications, where $\post\in 
 \{\fkl,\hlz\}$ denotes the chosen classical post-processing involved (see Section~\ref{sec:cpostprocessing}):
\begin{enumerate}[(i)]
\item
In the angle optimization step, the modified cost function Hamiltonian $H_G^+=H_G+\Delta^{\post}_G$  is used. Here $\Delta_G^\fkl$ and $\Delta_G^\hlz$ are the corresponding  operators defined in Eq.~\eqref{eq:impOperFKL} and~\eqref{eq:impOperHLZ}, respectively.

\item
The classical post-processing procedure~$\post$ is applied to the measurement result obtained by measuring the optimal state.
\end{enumerate}
 Figure~\ref{fig:ppVariational} shows the general procedure.

\begin{figure}
\begin{mdframed}[
    linecolor=black,
    linewidth=2pt,
    roundcorner=4pt,
    backgroundcolor=gray!15,
    userdefinedwidth=\textwidth,
]
\begin{algorithmic}[1] 
        \Function{$\post$-$\cA^+$}{$3$-regular graph $G=(V, E)$ with $V=[n]$}
        \State Compute $\theta_* = \arg\max_{\theta\in \Theta} \bra{\Psi_G\left( \theta \right)} (H_G+\Delta_G^{\post}) \ket{\Psi_G \left( \theta \right)}$
            \State Measure $\Psi_G \left( \theta_* \right)$ in the computational basis getting outcome~$C\in\{0,1\}^n$.\label{it:steptwoppfklA}
            \State Compute $C' = \post (G, C)$.\label{it:returnppfklA}
            \State \textbf{return} $C'$
        \EndFunction
\end{algorithmic}
\end{mdframed}

\caption{The twisted algorithm~$\post$-$\cA^+$ where
$\post\in \{\fkl,\hlz\}$ and where $\cA = \left\{ \ket{\Psi_G \left( \theta \right)} \right\}_{\theta \in \Theta}$ is a variational algorithm. The measurement result~$C\in\{0,1\}^n$ obtained in step~\eqref{it:steptwoppfklA} defines a cut of~$G$.}
\label{fig:ppVariational}
\end{figure}

\section[Lower bounds on approximation ratios of twisted QAOA]{Lower bounds on approximation ratios of \tqaoa \label{sec:lowerboundsqaoa}}
Here we analyze the twisted versions of \qaoa in detail. For a graph~$G$ and $p\in\mathbb{N}$, let~$H_G$ be the Hamiltonian~\eqref{eq:maxcuthamiltonian} and $\psi_G(\beta,\gamma)$ the level-$p$ trial wavefunction defined by~\eqref{eq:qaoapwavefunction}. The twisted algorithms \tfklqp{p} and \thlzqp{p} proceed as described in Figure~\ref{fig:ppQAOA}. We prove lower bounds on the approximation ratios $\appRatio{G} \left( \mtfklqp{p} \right)$
and $\appRatio{G} \left( \mthlzqp{p} \right)$ for certain families of $3$-regular graphs~$G$. 

A remark on the proof technique is in order here: While we rely on numerical gradient descent to determine good candidate parameters, these are used to optimize our lower bounds only. In particular, the validity of the established bounds is independent of the correctness of these numerical methods. This is especially important because we consider high-dimensional optimization problems and gradient descent may or may not converge.

\begin{figure}
\begin{mdframed}[
    linecolor=black,
    linewidth=2pt,
    roundcorner=4pt,
    backgroundcolor=gray!15,
    userdefinedwidth=\textwidth,
]
\begin{algorithmic}[1]
        \Function{$\post$-$\qaoa_p$}{$3$-regular graph $G=(V, E)$ with $V=[n]$}
        \State Compute $(\beta_*,\gamma_*)=\arg\max_{(\beta,\gamma)\in [0,2\pi)^p\times [0,2\pi)^{p}} \bra{\psi_G(\beta,\gamma)}(H_G+\Delta_G^{\post})\ket{\psi_G(\beta,\gamma)}$
            \State Measure $\psi_G(\beta_*,\gamma_*)$ in the computational basis getting outcome~$C\in\{0,1\}^n$\label{it:steptwoppfklQ}
            \State Compute $C'=\post(G,C)$\label{it:returnppfklQ}
            \State \textbf{return}  $C'$
        \EndFunction
\end{algorithmic}
\end{mdframed}
\caption{The twisted algorithm $\post$-$\qaoa_p$
for $\post\in \{\fkl,\hlz\}$. \label{fig:ppQAOA}}
\end{figure}

\subsection[Approximation ratios of FKL-QAOA+ for 3-regular graphs]{Approximation ratios of \tfklq for $3$-regular graphs}
We denote the girth of a graph $G$, i.e., the size of the smallest cycle in $G$, by $g(G)$. We present two kinds of results: for $ \mtfklqp{1}$, we give a bound applicable to all $3$-regular graphs. For higher levels~$p$, we give bounds applicable to $3$-regular graphs with high girth.

\newpage
\begin{proposition} \label{prop:twFKL}
    Let $G$ be a $3$-regular graph. Then
    \begin{enumerate}[(i)]
        \item\label{it:FKLaitem} $\appRatio{G} \left( \mtfklqp{1} \right) \geq 0.7443$\ .
        \item\label{it:FKLbitem} If $g(G) \geq 7$, \,\,\! then $\appRatio{G} \left( \mtfklqp{2} \right) \geq 0.7887$\ .
        \item If $g(G) \geq 9$, \,\,\! then $\appRatio{G} \left( \mtfklqp{3} \right) \geq 0.8146$\ .
        \item If $g(G) \geq 11$, then $\appRatio{G} \left( \mtfklqp{4} \right) \geq 0.8323$\ .
        \item If $g(G) \geq 13$, then $\appRatio{G} \left( \mtfklqp{5} \right) \geq 0.8457$\ .
        \item \label{it:FKLfitem} If $g(G) \geq 15$, then $\appRatio{G} \left( \mtfklqp{6} \right) \geq 0.8564$\ .
    \end{enumerate}
\end{proposition}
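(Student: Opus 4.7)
By Lemma~\ref{lem:quantumFKL}, the expected cut produced by \tfklqp{p} on input $G$ equals $\max_{(\beta,\gamma)}\bra{\psi_G(\beta,\gamma)}(H_G+\tfrac{1}{3}N_G)\ket{\psi_G(\beta,\gamma)}$. Combining this with the elementary bound $\MC(G)\leq |E|$ together with the identity $|T_G|=2|E|$ that holds for any $3$-regular graph, it suffices to exhibit, for each of the six parts, concrete angles $(\beta_*,\gamma_*)$ such that
\begin{align}
\tfrac{1}{|E|}\bra{\psi_G(\beta_*,\gamma_*)}H_G\ket{\psi_G(\beta_*,\gamma_*)} + \tfrac{2}{3}\cdot\tfrac{1}{|T_G|}\bra{\psi_G(\beta_*,\gamma_*)}N_G\ket{\psi_G(\beta_*,\gamma_*)}
\end{align}
exceeds the stated numerical threshold, uniformly over every admissible graph $G$.

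Next, the locality property~\eqref{eq:localexpressionpneighborhood} together with the uniformity identity~\eqref{eq:ncatrepresent} lets me decompose
\begin{align}
\bra{\psi_G}H_G\ket{\psi_G}
&= \sum_{\tilde{G}\in\cE^{(p)}(\edgeMinigraph)}n_G(\tilde{G})\cdot \tfrac{1}{2}\bigl(1-\bra{\psi_{\tilde{G}}}Z_1Z_2\ket{\psi_{\tilde{G}}}\bigr), \\
\bra{\psi_G}N_G\ket{\psi_G}
&= \sum_{\tilde{H}\in\cE^{(p)}(\tripletMinigraph)}n_G(\tilde{H})\cdot \bra{\psi_{\tilde{H}}}\Pi_{c,j,k}\ket{\psi_{\tilde{H}}}.
\end{align}
For parts~\ref{it:FKLbitem}--\ref{it:FKLfitem}, the girth hypothesis $g(G)\geq 2p+3$ forces the $p$-environment of every edge and of every triplet in $G$ to be a tree, so both equivalence classes collapse onto the single $p$-trees $T^{(p)}(\edgeMinigraph)$ and $T^{(p)}(\tripletMinigraph)$. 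The graph-dependent weights $n_G$ then cancel exactly against $|E|$ and $|T_G|=2|E|$, so the normalized quantity reduces to the graph-independent scalar
\begin{align}
E_p(\beta_*,\gamma_*) + \tfrac{2}{3}\,T_p(\beta_*,\gamma_*),
\end{align}
with $E_p$ and $T_p$ the per-edge and per-triplet expectations on the respective $p$-tree. The plan is then to locate candidate angles by numerical gradient descent on these finite, tree-supported functions, and to certify each of the five thresholds $0.7887$, $0.8146$, $0.8323$, $0.8457$, $0.8564$ by a controlled evaluation at the chosen angles; the validity of the lower bound does not require optimality of the descent output.

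For part~\ref{it:FKLaitem} ($p=1$, no girth assumption), the classes $\cE^{(1)}(\edgeMinigraph)$ and $\cE^{(1)}(\tripletMinigraph)$ are still finite, their members being enumerated by the possible coincidences and additional adjacencies among the one-step neighbors of the distinguished vertices in a $3$-regular graph. I would write each local expectation as an explicit trigonometric function of $(\beta,\gamma)\in[0,2\pi)^2$ and select $(\beta_*,\gamma_*)$ so that
\begin{align}
\min_{\tilde{G}\in\cE^{(1)}(\edgeMinigraphScript)}\tfrac{1}{2}\bigl(1-\bra{\psi_{\tilde{G}}}Z_1Z_2\ket{\psi_{\tilde{G}}}\bigr) + \tfrac{2}{3}\min_{\tilde{H}\in\cE^{(1)}(\tripletMinigraphScript)}\bra{\psi_{\tilde{H}}}\Pi_{c,j,k}\ket{\psi_{\tilde{H}}} \geq 0.7443.
\end{align}
Since the full normalized ratio is a convex combination of these per-type contributions with weights $n_G(\tilde{G})/|E|$ and $n_G(\tilde{H})/|T_G|$, such an angle choice immediately yields the required uniform bound.

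\textbf{Main obstacle.} The primary difficulty is computational rather than structural. For parts~\ref{it:FKLbitem}--\ref{it:FKLfitem}, the triplet $p$-tree grows exponentially in $p$ (roughly doubling with each layer in a $3$-regular tree), and expressing $T_p(\beta,\gamma)$ symbolically or evaluating it with certified precision becomes delicate by $p=5,6$; the crux is carrying out these evaluations with enough rigour to ensure the claimed digits, independently of how gradient descent behaves. For part~\ref{it:FKLaitem}, the obstacle is organising the finite catalog of $1$-environments of edges and triplets in $3$-regular graphs and checking the simultaneous minimum lower bound at a single angle pair.
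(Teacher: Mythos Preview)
Your plan for parts~\ref{it:FKLbitem}--\ref{it:FKLfitem} coincides with the paper's: under the girth hypothesis $g(G)>2p+2$ every $p$-environment of an edge or triplet is the corresponding $p$-tree, the graph-dependent multiplicities cancel against $|E|$ and $|T_G|=2|E|$, and one is left with a single scalar to evaluate at numerically found witness angles. The paper organises the computation slightly differently (it packages the edge and projector terms into a single triplet operator $T_{(c,j,k)}=\tfrac{1}{4}(H^{c,j}+H^{c,k})+\tfrac{1}{3}\Pi_{c,j,k}$ and evaluates $2\langle T_{(c,j,k)}\rangle$ on $T^{(p)}(\tripletMinigraph)$), but on the tree this is algebraically identical to your $E_p+\tfrac{2}{3}T_p$.

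For part~\ref{it:FKLaitem} there is a genuine gap. Your bound
\[
\min_{\tilde G\in\cE^{(1)}(\edgeMinigraphScript)} e_{\tilde G}(\beta,\gamma)\;+\;\tfrac{2}{3}\min_{\tilde H\in\cE^{(1)}(\tripletMinigraphScript)} t_{\tilde H}(\beta,\gamma)
\]
is valid, but you have not verified that it reaches $0.7443$ at any angle, and there is reason to doubt it: triangle-containing environments can have strictly smaller raw energy contributions than the tree, and with only the trivial bound $\MC(G)\le|E|$ there is nothing to compensate. The paper does \emph{not} use $\MC(G)\le|E|$ here. Instead it exploits the sharper combinatorial bound $\MC(G)\le\sum_{e\in E}\bigl(1-\tfrac{1}{5}\delta_{\TriangleScript(G)}(e)\bigr)$, rewrites both numerator and this upper bound as sums over triplets, and applies the mediant inequality to the ratios $\langle T_{(c,j,k)}\rangle_{G_r}/L^{G_r}_{(c,j,k)}$ over the eleven triplet $1$-environments. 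The refined denominator $L^{G_r}_{(c,j,k)}<\tfrac12$ for triangle-containing $G_r$ is precisely what lets those environments satisfy the ratio bound even when their raw $\langle T\rangle_{G_r}$ falls below the tree value; separating the two minima and dividing by $|E|$ throws this compensation away. To repair your argument for~\ref{it:FKLaitem} you would either need to check numerically that your cruder bound still clears $0.7443$ at some angle (it may not), or adopt the paper's local-$\MC$ refinement and mediant step.
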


\begin{proof}
        \eqref{it:FKLaitem} For brevity, let us write $\psi_G(\theta)$ for the \qaoap{1} state with parameters $\theta=(\beta,\gamma)\in [0,2\pi)^2$.
        Recall from Lemma~\ref{lem:quantumFKL} that the expected approximation ratio obtained from such a state using the \fkl-post-processing procedure is given by
        \begin{align}
          \frac{\bra{\psi_G(\theta)}\left( H_G + \Delta_G^{\fkl} \right) \ket{\psi_G(\theta)}}{\MC(G)}\ .\label{eq:appRatiop1FKL}
        \end{align}
                We follow and simplify the approach of \cite{qaoaOrigPaper,WurtzLove} and bound the ratio~\eqref{eq:appRatiop1FKL} in terms of its local contributions.
        
        We first rearrange and express the numerator of~\eqref{eq:appRatiop1FKL} as a sum over triplets. Notice that since the graph is $3$-regular, any edge lies in exactly $4$~triplets. Hence
        \begin{align}
            H_G + \Delta_G^{\fkl} = \sum_{(c,j,k)\in T_G} T_{(c, j, k)}\  \label{eq:FKLHtwisted}
        \end{align}
        where $T_{(c, j, k)}$ is the triplet operator defined as 
        \begin{align}
            T_{(c, j, k)} := \frac{H^{c,j} + H^{c,k}}{4}%{2} \frac{\left| E \right|}{\left| T_G \right|} 
            + \frac{1}{3} \Pi_{c, j, k} \qquad\textrm{ for }\qquad (c,j,k)\in T_G\label{eq:TripletOp}
        \end{align}
        and  where $H^{a, b} := \frac{1}{2} \left( I - Z_a Z_b \right)$ is term  in the MaxCut-problem Hamiltonian~$H_G$ associated with  the edge $\{a,b\}$. 
        
        Next consider the denominator in the expression~\eqref{eq:appRatiop1FKL}, i.e., the maximum  size~$\MC(G)$ of a cut. We can bound this term by the expression
        \begin{align}
            \MC(G)\leq |E|-|\isolatedTriangle(G)|-|\crossedSquare(G)|\ ,\label{eq:MCTrianglesSquares}
        \end{align}
        where  $\isolatedTriangle(G)$ is the set of isolated triangles (triangles that  share an edge with another triangle) in $G$ and $\crossedSquare(G)$ is the set of crossed squares (consisting of two triangles sharing an edge). Inequality~\eqref{eq:MCTrianglesSquares}
         follows immediately from the expression that in any cut of~$G$, there is at least one unsatisfied (i.e., ``uncut") edge in each isolated triangle because of frustration. Similarly, there is at least one unsatisfied edge in each crossed square. We note that the bound~\eqref{eq:MCTrianglesSquares} applies to any $3$-regular graph~$G$ with more than $4$~vertices because  in these graphs, any triangle is either isolated or part of a crossed squared. (Observe that for the remaining graph, the complete graph~$G=K_4$ on $4$~vertices, we have $\MC(K_4)=4$.)
         
         We can bound~$\MC(G)$ further starting from~\eqref{eq:MCTrianglesSquares} by expressing the right hand side as a sum over edges. Since every isolated triangle has three edges, we can express the number of isolated triangles as
         \begin{align}
             |\isolatedTriangle(G)|&=\frac{1}{3}\sum_{e\in E} \delta_{\isolatedTriangleScript(G)}(e)\ ,\label{it:counterone}
         \end{align}
         where $\delta_{\isolatedTriangleScript(G)}(e)$ is $1$ if the edge $e$ is part of an isolated triangle in graph the $G$ and $0$ otherwise. Similarly, we have 
         \begin{align}
                     |\crossedSquare(G)|&=\frac{1}{5}\sum_{e\in E} \delta_{\crossedSquareScript(G)}(e)\label{it:countertwo}
        \end{align}
        for crossed squares, where $\delta_{\crossedSquareScript(G)}(e)$ is $1$ if the edge~$e$ is part of a crossed square in the graph~$G$ and~$0$ otherwise.
         
        To establish our bound, we only consider the  $1$-environment of each~edge $e \in E$, i.e.,~$G^{(1)}[e]$. 
        For an edge $e\in E$ which belongs to a triangle, the $1$-environment~$G^{(1)}[e]$ is not necessarily sufficient to distinguish  whether the triangle is isolated or belongs to a crossed square: for example, this is the case for an edge~$e$ that belongs to a crossed square but is not shared by both triangles. The fraction of uncut edges (in any cut) is $1/3$~for an isolated triangle, and $1/5$~for a crossed square. Using the smaller  of these two contributions per edge, i.e., pretending that each triangle is in a crossed square, yields the bound
        \begin{align}
                  \MC(G)&     \leq\sum_{e\in E}\!\left(\!1- \frac{1}{5}\delta_{\TriangleScript(G)}(e) \!\right)\ .\quad \label{eq:edgeMCTrianglesSquares}
        \end{align}
        Here $\delta_{\TriangleScript(G)}(e)$ indicates whether the edge~$e$ is part of a triangle, i.e., $\delta_{\TriangleScript(G)}(e)$ equals~$1$ whenever the edge~$e$ is part of a triangle in graph~$G$ and~$0$ otherwise. Notice that $\delta_{\TriangleScript(G)}(e)=\delta_{\TriangleScript(G^{(1)}[e])}(e)$, therefore it is enough to examine the $1$-environments of edges to obtain the bound (the possible environments are showcased in Figure~\ref{fig:edgep1Environments}). We note that  while we have excluded $G=K_4$ in the proof of  inequality~\eqref{eq:edgeMCTrianglesSquares}, it is easy to check directly that this graph also satisfies~\eqref{eq:edgeMCTrianglesSquares}.

    Expression~\eqref{eq:edgeMCTrianglesSquares} motivates defining the local averaged \maxcut fraction of an edge~$e$ in~$G$ as 
    \begin{align}
      L^G_e:=1 - \frac{1}{5}\delta_{\TriangleScript(G)}(e)\ .
    \end{align}
    Using that every edge appears in $4$ triplets, we can reexpress the upper bound~\eqref{eq:edgeMCTrianglesSquares} as
    \begin{align}
       \MC(G)&\leq \frac{1}{4}\sum_{(c,j,k)\in T_G} \left(L^G_{\{c,j\}}+L^G_{\{c,k\}}\right)\\
       &=  \sum_{(c,j,k)\in T_G} L^G_{(c,j,k)}\ ,  \label{eq:TripletLocalMaxcut}
    \end{align}
    where
    \begin{align}
        L^G_{(c,j,k)}:=\frac{1}{4}\left(L^G_{\{c,j\}}+L^G_{\{c,k\}}\right)
    \end{align} denotes the local averaged \maxcut fraction of a triplet~$(c,j,k)\in T_G$.
    
    Inserting the upper bound~\eqref{eq:TripletLocalMaxcut}
    on~$\MC(G)$ and expression~\eqref{eq:FKLHtwisted} into~\eqref{eq:appRatiop1FKL} gives
    \begin{align}
    \frac{\bra{\psi_G(\theta)}\left( H_G + \Delta_G^{\fkl} \right) \ket{\psi_G(\theta)}}{\MC(G)}
    &\geq \frac{\sum_{(c,j,k) \in T_G} \bra{\psi_G(\theta)} T_{(c,j,k)}\ket{\psi_G(\theta)}}{\sum_{(c,j,k)\in T_G} L^G_{(c,j,k)}}\ .\label{eq:lowerboundfkla}
        \end{align}
        Recall that for any triplet~$(c,j,k)\in T_G$, the expectation value~$\bra{\psi_G(\theta)}T_{(c,j,k)}\ket{\psi_G(\theta)}$   is equal to the local expectation $\bra{\psi_{\tilde{G}}(\theta)}T_{(c,j,k)}\ket{\psi_{\tilde{G}}(\theta)}$, where~$\tilde{G}$ is the (appropriate) graph environment of the triplet. By its definition as a local quantity, the combinatorial quantity~$L^{G}_{(c,j,k)}=L^{\tilde{G}}_{(c,j,k)}$ also depends only on the corresponding graph environment.  The set of equivalence classes $\cE^{(1)}(\tripletMinigraph)=\{G_r\}_{r=1}^{11}$ of possible graph environments consists of~$11$~(equivalence classes of) graphs, see Figure~\ref{fig:tripletp1Environments}.
        Denoting -- as in~\eqref{eq:ncatrepresent} -- by
        $n_G(G_r)$ the number of times the environment~$G_r$ appears in~$G$, we can restate~\eqref{eq:lowerboundfkla} as
        \begin{align}
            \frac{\bra{\psi_G(\theta)}\left( H_G + \Delta_G^{\fkl} \right) \ket{\psi_G(\theta)}}{\MC(G)}
            &\geq \frac{\sum_{r=1}^{11}n_G(G_r) \bra{\psi_{G_r}(\theta)} T_{(c,j,k)}\ket{\psi_{G_r}(\theta)}}{\sum_{r=1}^{11} n_G(G_r)L^{G_r}_{(c,j,k)}}\  .\label{eq:lowerboundpsigmc}
        \end{align}
        Eq.~\eqref{eq:lowerboundpsigmc} is valid for any choice of~$\theta\in [0,2\pi)^2$. Suppose now that we have found some angles~$\overline{\theta}\in [0,2\pi)^2$ such that
        \begin{align}
            \frac{\bra{\psi_{G_s}(\overline{\theta})} T_{(c,j,k)}\ket{\psi_{G_s}(\overline{\theta})}}{ L^{G_s}_{(c,j,k)}}\geq 
            \frac{\bra{\psi_{G_1}(\overline{\theta})} T_{(c,j,k)}\ket{\psi_{G_1}(\overline{\theta})}}{ L^{G_1}_{(c,j,k)}}\qquad\textrm{ for all }\qquad s=2,\ldots,11\ .\label{eq:llbound}
        \end{align}
        An example of such a pair is 
        \begin{align}
            \overline{\theta}=  (\overline{\beta},\overline{\gamma})=(1.130565, 5.667705)
            \label{eq:barthetavalues}
        \end{align}as can be verified by straightforward computation.         The mediant inequality $\frac{a+b}{c+d}\geq \min\{\frac{a}{c},\frac{b}{d}\}$ implies (inductively) that
        \begin{align}
            \frac{\sum_{r=1}^{11} n_r t_r}{\sum_{r=1}^{11}n_r\ell_r}\geq \min_{r=1,\ldots,{11}} \frac{t_r}{\ell_r}
        \end{align}
        for any integers~$\{n_j\}_{j=1}^{11}\subset\mathbb{N}_0$ and non-negative scalars~$\{t_r\}_{r=1}^{11}$, $\{\ell_r\}_{r=1}^{11}$. Combining this with~\eqref{eq:llbound}, we conclude that
        \begin{align}
            \frac{\sum_{r=1}^{11}n_G(G_r) \bra{\psi_{G_r}(\theta)} T_{(c,j,k)}\ket{\psi_{G_r}(\theta)}}{\sum_{r=1}^{11} n_G(G_r)L^{G_r}_{(c,j,k)}}\geq \frac{\bra{\psi_{G_1}(\overline{\theta})} T_{(c,j,k)}\ket{\psi_{G_1}(\overline{\theta})}}{ L^{G_1}_{(c,j,k)}}\geq 0.7443\ .\label{eq:numericalboundgrtheta}
        \end{align}
        From~\eqref{eq:numericalboundgrtheta} and~\eqref{eq:lowerboundpsigmc}
        we obtain 
        \begin{align}
            \frac{\bra{\psi_G(\theta)}\left( H_G + \Delta_G^{\fkl} \right) \ket{\psi_G(\theta)}}{\MC(G)}
    &\geq 0.7443
    \end{align}
    and the claim follows by taking the maximum over~$\theta\in [0,2\pi)^2$.

    Let us briefly elaborate on the choice~\eqref{eq:barthetavalues} of parameters~$\overline{\theta}$ in this proof. 
    By direct computation, we numerically observe that the quantity
       $\max_{\theta\in [0,2\pi)^2}  
            \frac{\bra{\psi_{G_r}(\theta)} T_{(c,j,k)}\ket{\psi_{G_r}(\theta)}}{ L^{G_r}_{(c,j,k)}}$
        is minimal for $r=1$. The parameters  $\overline{\theta}\in [0,2\pi)^2$ in Eq.~\eqref{eq:barthetavalues}  are the numerically obtained angles achieving the maximum  for $r=1$. We note that their only required feature in our argument is property~\eqref{eq:barthetavalues}. This can be verified immediately. A proof that these values~$\overline{\theta}$ indeed correspond to some  maximum is not required.

    \eqref{it:FKLbitem}--\eqref{it:FKLfitem}
    Let $\psi_G(\theta)$ for $\theta \in [0,2\pi)^{2p}$ be the \qaoap{p}-wave function. 
    We again consider the expected approximation ratio given by the expression ratio~\eqref{eq:appRatiop1FKL}. We can use the
    trivial lower bound~$\MC(G)\leq |E|$ on the size of the maximum cut, giving
    \begin{align}
    \appRatio{G} \left( \mtfklqp{p} \right)\geq |E|^{-1}\cdot 
        \bra{\psi_G(\theta)}\left( H_G + \Delta_G^{\fkl} \right) \ket{\psi_G(\theta)}\label{eq:apprationGmtklqp}
    \end{align}
    for any choice of~$\theta\in [0,2\pi)^{2p}$.     The assumptions on the girth can be expressed as $g(G)>2p + 2$ for $p = 2,3,4,5,6$, i.e., the level of \qaoa. For such high-girth graphs, all relevant graph environments of an arbitrary triplet in~$G$ are isomorphic to the tree~$T^{(p)} \big( \tripletMinigraph\big)$, see Figure~\ref{fig:triplettrees}. Therefore, using~\eqref{eq:FKLHtwisted}, the bound~\eqref{eq:apprationGmtklqp} becomes 
    \begin{align}
     \appRatio{G} \left( \mtfklqp{p} \right)\geq 2\bra{\psi_{T^{(p)} \big( \tripletMinigraph\big)} \left( \theta \right)} T_{(c, j, k)} \ket{\psi_{T^{(p)} \big( \tripletMinigraph\big)} \left( \theta \right)}\label{eq:boundrhsxm}
    \end{align}
    for any choice of $\theta\in [0,2\pi)^{2p}$. We can evaluate the right hand side of this inequality using a tensor network algorithm and gradient descent to maximize the angles. In particular, in each of the cases~\eqref{it:FKLbitem}--\eqref{it:FKLfitem} we found a set of angles~$\theta$ such that the right hand side of~\eqref{eq:boundrhsxm} is equal to the value stated in the proposition. These angles are listed in Figure~\ref{fig:angles}. This completes the  proof.
\end{proof}

For sake of comparison, we also obtained the guaranteed approximation ratios of bare \qaoa for $p = 4, 5,$ and $6$ for high girth graphs. These were computed in a similar fashion as explained at the end of the proof of Proposition~\ref{prop:twFKL}:

\begin{align}
     \appRatio{G} \left( \qaoapm{p} \right)\geq \bra{\psi_{T^{(p)} ( \edgeMinigraphScript )} \left( \theta \right)} 2^{-1}(I-Z_1Z_2) \ket{\psi_{T^{(p)} ( \edgeMinigraphScript )} \left( \theta \right)}\label{eq:QAOAboundsp45}
\end{align}
The witness angles proving the lower bounds are listed in Figure~\ref{fig:angles}.

\newpage
\subsection[Approximation ratios of HLZ-QAOA+ for 3-regular graphs]{Approximation ratios of \thlzq for $3$-regular graphs}

\begin{proposition} \label{prop:twHLZ}
    Let $G=(V, E)$ be a $3$-regular graph. Then 
    \begin{enumerate}[(i)]
        \item \label{it:HLZaitem} If $G$ is triangle-free (i.e. $g(G) \geq 4$), then $\appRatio{G} \left( \mthlzqp{1} \right) \geq 0.7548$\ .
        \item \label{it:HLZbitem} If $g(G) \geq 7$, \,\,\! then $\appRatio{G} \left( \mthlzqp{2} \right) \geq 0.7954$\ .
        \item If $g(G) \geq 9$, \,\,\! then $\appRatio{G} \left( \mthlzqp{3} \right) \geq 0.8191$\ .
        \item If $g(G) \geq 11$, then $\appRatio{G} \left( \mthlzqp{4} \right) \geq 0.8358$\ .
        \item If $g(G) \geq 13$, then $\appRatio{G} \left( \mthlzqp{5} \right) \geq 0.8482$\ .
        \item \label{it:HLZfitem} If $g(G) \geq 15$, then $\appRatio{G} \left( \mtfklqp{6} \right) \geq 0.8582$\ .
    \end{enumerate}
\end{proposition}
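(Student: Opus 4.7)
The plan is to parallel the proof of Proposition~\ref{prop:twFKL}, replacing the triplet decomposition by a vertex (star) decomposition of the modified Hamiltonian. By Lemma~\ref{lem:quantumHLZ}, the quantity to lower bound is $\MC(G)^{-1}\cdot\max_\theta\bra{\psi_G(\theta)}(H_G+\Delta_G^{\hlz})\ket{\psi_G(\theta)}$. Since every edge of a $3$-regular graph is incident to exactly two vertices, I would rewrite
\begin{align}
H_G+\Delta_G^{\hlz} = \sum_{c\in V} S_c\ ,\qquad S_c := \frac{1}{2}\sum_{v\in A(c)} H^{c,v} + \frac{2}{5}\Pi^{(2)}_{c} + \frac{17}{15}\Pi^{(3)}_{c}\ ,
\end{align}
so that each $S_c$ is supported on the closed neighborhood $\overline{A}(c)$ and hence is at most $4$-local. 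Triangle-freeness in every item yields the denominator bound $\MC(G)\leq|E|=\tfrac{3}{2}|V|$, which is well suited to the vertex decomposition.

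For item~\eqref{it:HLZaitem} ($p=1$, $g(G)\geq 4$), I would invoke locality~\eqref{eq:localexpressionpneighborhood}: $\bra{\psi_G(\theta)}S_c\ket{\psi_G(\theta)}$ depends only on the equivalence class of the $1$-environment $G^{(1)}[\overline{A}(c)]$, which for triangle-free graphs is determined by the pattern of shared second-layer neighbors among the three outer vertices $A(c)_1,A(c)_2,A(c)_3$ (equivalently, by the $4$-cycles through $c$). I would enumerate this finite list $\{H_r\}_{r=1}^{R}$ of environment classes, obtaining from uniformity
\begin{align}
\sum_{c\in V}\bra{\psi_G(\theta)}S_c\ket{\psi_G(\theta)} = \sum_{r=1}^{R} n_G(H_r)\bra{\psi_{H_r}(\theta)}S_c\ket{\psi_{H_r}(\theta)}\ ,
\end{align}
with denominator $\tfrac{3}{2}|V|=\tfrac{3}{2}\sum_r n_G(H_r)$. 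A numerical search would then produce an explicit pair $\overline\theta\in[0,2\pi)^2$ such that $\bra{\psi_{H_r}(\overline\theta)}S_c\ket{\psi_{H_r}(\overline\theta)}/(3/2)\geq 0.7548$ for every $r\in\{1,\ldots,R\}$, and the mediant inequality used in the proof of Proposition~\ref{prop:twFKL} then delivers the claimed bound for every $G$.

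For items~\eqref{it:HLZbitem}--\eqref{it:HLZfitem}, the assumption $g(G)\geq 2p+3$ ensures that the $p$-environment of any $\overline{A}(c)$ is isomorphic to a single rooted tree $T^{(p)}\bigl(\triStarMinigraph\bigr)$ (see Figure~\ref{fig:startrees}), since the longest potential cycle in this environment, passing through $c$ and two outer vertices via an external path of length $2p$, has total length at most $2p+2$. By uniformity all vertex contributions then coincide, and
\begin{align}
\appRatio{G}\left(\mthlzqp{p}\right)\geq \frac{2}{3}\bra{\psi_{T^{(p)}(\triStarMinigraph)}(\theta)}S_c\ket{\psi_{T^{(p)}(\triStarMinigraph)}(\theta)}
\end{align}
for any $\theta\in[0,2\pi)^{2p}$. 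I would maximize the right-hand side by tensor-network contraction combined with gradient descent over the $2p$ variational parameters, and extend Figure~\ref{fig:angles} with the resulting witness angles that realize the stated numerical bounds.

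The main obstacle is item~\eqref{it:HLZaitem}: the star support $\overline{A}(c)$ has one more vertex than the \fkl\ triplet and the $4$-local projector $\Pi^{(3)}_c$ couples all four support vertices simultaneously, so both the enumeration of $1$-environments and the per-class local expectation computations are appreciably more involved than in the \fkl\ argument; finding a single $\overline\theta$ that simultaneously beats $0.7548$ on every environment class demands careful numerical optimization and then rigorous post-hoc verification of the resulting inequalities. The high-girth items~\eqref{it:HLZbitem}--\eqref{it:HLZfitem} are comparatively routine once locality and uniformity reduce everything to a single tree computation per value of $p$, the only scaling cost being the growth of bond dimensions in the tensor-network evaluation with $p$.
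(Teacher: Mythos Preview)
Your proposal is correct and follows essentially the same approach as the paper: the same vertex/star decomposition $H_G+\Delta_G^{\hlz}=\sum_{c\in V}S_c$, the trivial bound $\MC(G)\leq|E|=\tfrac32|V|$, enumeration of the triangle-free $1$-environments of the star (the paper finds $R=8$), and for items~\eqref{it:HLZbitem}--\eqref{it:HLZfitem} reduction to the single tree $T^{(p)}\big(\triStarMinigraph\big)$ with witness angles found numerically. The only cosmetic difference is that the mediant inequality is unnecessary here since the per-vertex denominator is the constant $3/2$, so the bound reduces directly to $\tfrac23\min_r\bra{\psi_{G_r}(\overline\theta)}S_c\ket{\psi_{G_r}(\overline\theta)}$.
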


\begin{proof}
    \eqref{it:HLZaitem} Recall from Lemma~\ref{lem:quantumHLZ} that the expected approximation ratio obtained using the \hlz-post-processing procedure is given by
        \begin{align}
            \frac{\bra{\psi_G(\theta)}\left( H_G + \Delta_G^{\hlz} \right) \ket{\psi_G(\theta)}}{\MC(G)}\ ,\label{eq:appRatiop1HLZ}
        \end{align}
        where we again use $\psi_G(\theta)$ for the \qaoap{1} state.
        
        We rearrange and express the numerator \eqref{eq:appRatiop1HLZ} as a sum over $3$-star subgraphs, as they are underlying graphs of local terms of the improvement operator $\Delta_G^{\hlz}$.
        The  $3$-star graph with the central vertex $c$ has vertices $\{c, j, k, \ell\}$ and edges $\{ \{c, j\}, \{c, k\}, \{c, \ell\}\}$ and we depict it by \triStarMinigraph. Since the graph $G$ is $3$-regular, any edge $\{a,b\}\in E$ lies in exactly $2$ stars with central vertices $a$ and $b$. Hence
        \begin{align}
            H_G+\Delta_G^{\hlz} = \sum_{c\in V} S_{c}\ , \label{eq:triStarOp}
        \end{align}
        where $S_{c}$ is the $3$-star operator
        \begin{align}
            S_{c}:=\frac{H^{c,j}+H^{c,k}+H^{c,\ell}}{2}+\frac{2}{5} \Pi^{(2)}_{c} + \frac{17}{15} \Pi^{(3)}_{c} \qquad\textrm{ for }\qquad c\in V\ , 
        \end{align}
        $(j,k,\ell)$ is the ordered neighbourhood of $c$ in $G$ and $H^{a,b}$ is again the \maxcut term on edge $\{a,b\}$.
        
        Inserting the trivial upper bound on $\MC(G)\leq |E|$ and \eqref{eq:triStarOp} into \eqref{eq:appRatiop1HLZ} gives:
        \begin{align}
            \frac{\bra{\psi_G(\theta)}\left( H_G + \Delta_G^{\hlz} \right) \ket{\psi_G(\theta)}}{\MC(G)} &\geq \frac{\sum_{c\in V} \bra{\psi_G (\theta)} S_{c} \ket{\psi_G (\theta)}}{|E|} \label{eq:HLZsumOverVertices}
        \end{align}
        We can restate \eqref{eq:HLZsumOverVertices} as a sum over the local expectation values over the graph environments
        from the set $\cE^{(1)}\big(\triStarMinigraph\big)=\{G_r\}_{r=1}^8$ (listed in Figure~\ref{fig:trifreestarp1Environments}):
        \begin{align}
            \frac{\bra{\psi_G(\theta)}\left( H_G + \Delta_G^{\hlz} \right) \ket{\psi_G(\theta)}}{\MC(G)} 
            &\geq \frac{\sum_{r=1}^8 n_G(G_r)\bra{\psi_{G_r}(\theta)} S_{c} \ket{\psi_{G_r}(\theta)}}{|E|} \ , \label{eq:HLZstarnumbers}
        \end{align}
        where $n_G(G_r)$ is number of times the environment $G_r$ appears in graph $G$.
        
        Suppose now that we have found some angles~$\overline{\theta}\in [0,2\pi)^2$ such that
        \begin{align}
            \bra{\psi_{G_s}(\overline{\theta})} S_{c}\ket{\psi_{G_s}(\overline{\theta})}
            \geq 
            \bra{\psi_{G_1}(\overline{\theta})} S_{c}\ket{\psi_{G_1}(\overline{\theta})}\qquad\textrm{ for all }\qquad s=2,\ldots,8\ .\label{eq:HLZllbound}
        \end{align}
        An example of such a pair is 
        \begin{align}
            \overline{\theta}=  (\overline{\beta},\overline{\gamma})=(0.102870, 5.669319)
            \label{eq:HLZbarthetavalues}
        \end{align}as can be verified by straightforward computation.
        
        We combine \eqref{eq:HLZstarnumbers} with \eqref{eq:HLZllbound} and use the fact that $\sum_{r=1}^8 n_{G}(G_r)=|V|=2/3|E|$ for $3$-regular graphs:
        \begin{align}
            \frac{\bra{\psi_G(\theta)}\left( H_G + \Delta_G^{\hlz} \right) \ket{\psi_G(\theta)}}{\MC(G)} \geq \frac{2}{3} \bra{\psi_{G_1}(\overline{\theta})} S_{c}\ket{\psi_{G_1}(\overline{\theta})} \geq 0.7548 \label{eq:HLZGraphTristarApprox}
        \end{align}
        and the claim follows.

    \eqref{it:HLZbitem}--\eqref{it:HLZfitem} We will follow a similar line of reasoning as in \eqref{it:HLZaitem} and Proposition~\ref{prop:twFKL}\eqref{it:FKLbitem}--\eqref{it:FKLfitem}. The assumptions again guarantee that the considered graphs are of girth greater than $2p + 2$ with $p$ being the level of \qaoa. For such high-girth graphs, all graph environments of an arbitrary star in $G$ are isomorphic to $\tilde{G} = T^{(p)} \big( \triStarMinigraph \big)$. Therefore, 
    \begin{align}
        \appRatio{G} \left( \mthlzqp{p} \right) \geq \frac{2}{3} \bra{\psi_{\tilde{G}} (\theta)} S_{c} \ket{\psi_{\tilde{G}} (\theta)} \ ,
    \end{align}
    where $\psi_{\tilde{G}}(\theta)$ for $\theta \in [0,2\pi)^{2p}$ be the \qaoap{p}-wave function. 
    We obtain witness angles by numerical optimization (listed in Figure~\ref{fig:angles}) and the claim follows.
\end{proof}
We note that the proven lower bound Proposition~\ref{prop:twHLZ}\eqref{it:HLZaitem}
on the approximation ratio~$\alpha_G(\mthlzqp{1})$
of the twisted algorithm~\qaoap{1} is below the value~$0.7555$
resulting from the application of
\hlz to a constant partition (see~\eqref{eq:hlzconstant}). An improvement over this trivial (classical) algorithm can only be observed starting from level $p\geq 2$ (cf. Proposition ~\eqref{prop:twHLZ}\eqref{it:HLZbitem}--\eqref{it:HLZfitem}). This is not surprising given the fact that the \qaoa-ansatz is very restricted, especially for small values of~$p$. In particular, 
for any angles $(\beta,\gamma)$, the  \qaoa-state $\psi_G(\beta,\gamma)$  (cf.~\eqref{eq:qaoapwavefunction}) with the usual cost function Hamiltonian~$H_G$ for \maxcut is different from both the all-zero state~$\ket{0}^{\otimes n}$ and the all-one state~$\ket{1}^{\otimes n}$. This is the case for any level~$p$ since because of the $\mathbb{Z}_2$-symmetry of the ansatz: every state~$\psi_G(\beta,\gamma)$ is an eigenstate of the operator~$X^{\otimes n}$.

\textbf{Acknowledgments.} We thank Zahra Baghali Khanian for suggesting the name ``twisted QAOA''. RK and AK acknowledge support by IBM Research. 
RK and LC gratefully acknowledge support by the European Research Council  under grant agreement no.~101001976 (project EQUIPTNT). LC thanks IBM Zurich for their hospitality.

\bibliographystyle{plain}
\bibliography{Bibliography}

\newpage
\appendix 
\section{Used graph environments}

\begin{figure}[!ht]
\begin{center}
	\begin{minipage}{0.3\textwidth}
	\centering
	\begin{tikzpicture}[scale=1]
        \clip (-2,-1.7) rectangle (2,1);
	    \coordinate (1) at (0,0);
	    \coordinate (2) at (1,0);
	    \coordinate (11) at (-0.5, 0.866); 
	    \coordinate (12) at (-0.5, -0.866); 
	    \coordinate (21) at (1.5, 0.866);
	    \coordinate (22) at (1.5, -0.866); 
	    
	    \draw[ultra thick] (1) -- (2); 
	    \node[below right= 1pt and -2pt] at (1) {$1$}; 
	    \node[below left= 1pt and -2pt] at (2) {$2$}; 
	    \fill (1) circle(2pt); 
	    \fill (2) circle(2pt); 
	    
	    \fill (11) circle(2pt); 
	    \fill (12) circle(2pt); 
	    \fill (21) circle(2pt); 
	    \fill (22) circle(2pt); 
	    
	    \draw[thick] (1) -- (11); 
	    \draw[thick] (1) -- (12); 
	    \draw[thick] (2) -- (21); 
	    \draw[thick] (2) -- (22); 
	    
	    \node at (0.5,-1.5) (label){$G_1$};
	    
	\end{tikzpicture}
	\end{minipage}
	\begin{minipage}{0.3\textwidth}
	\centering
	\begin{tikzpicture}[scale=1]
        \clip (-2,-1.7) rectangle (2,1);
	    \coordinate (1) at (0,0);
	    \coordinate (2) at (1,0);
	    \coordinate (3) at (0.5, 0.866);
	    \coordinate (12) at (-0.866, -0.5); 
	    \coordinate (22) at (1.866, -0.5); 
	    
	    \draw[ultra thick] (1) -- (2); 
	    \node[below=1pt] at (1) {$1$}; 
	    \node[below=1pt] at (2) {$2$}; 
	    \fill (1) circle(2pt); 
	    \fill (2) circle(2pt); 
	    
	    \fill (12) circle(2pt); 
	    \fill (22) circle(2pt); 
	    \fill (3) circle(2pt); 
	    
	    \draw[thick] (1) -- (12); 
	    \draw[thick] (2) -- (22);  
	    \draw[thick] (3) -- (1); 
	    \draw[thick] (3) -- (2); 
	    
	    \node at (0.5,-1.5) (label){$G_2$};
	    
	\end{tikzpicture}
	\end{minipage}
	\begin{minipage}{0.3\textwidth}
	\centering
	\begin{tikzpicture}[scale=1]
        \clip (-2,-1.7) rectangle (2,1);
	    \coordinate (1) at (0,0);
	    \coordinate (2) at (1,0);
	    \coordinate (3) at (0.5, 0.866);
	    \coordinate (4) at (0.5, -0.866); 
	    
	    \draw[ultra thick] (1) -- (2); 
	    \node[below left] at (1) {$1$}; 
	    \node[below right] at (2) {$2$}; 
	    \fill (1) circle(2pt); 
	    \fill (2) circle(2pt); 
	    
	    \fill (3) circle(2pt); 
	    \fill (4) circle(2pt); 
	    
	    \draw[thick] (1) -- (3); 
	    \draw[thick] (1) -- (4);  
	    \draw[thick] (2) -- (3); 
	    \draw[thick] (2) -- (4); 
	    
	    \node at (0.5,-1.5) (label){$G_3$};
	    
	\end{tikzpicture}
	\end{minipage}
	\end{center}
	\caption[Edge environments, $p = 1$]{The edge environments $\mathcal{E}^{(1)} \big(\edgeMinigraph\big)$ as described in Section~\ref{sec:localityuniformity}.}
	\label{fig:edgep1Environments}
%\end{figure}
%\begin{figure}[!ht]
\vspace{\floatsep}
\begin{center}
\begin{minipage}{0.3\textwidth}
	\centering
	\begin{tikzpicture}[scale=1]
        \clip (-2,-1.7) rectangle (2,1.5);
	    \coordinate (j) at (-1,0);
	    \coordinate (i) at (0,0);
	    \coordinate (k) at (1,0);
	    \coordinate (ia) at (0,1);
	    \coordinate (ka) at (1.5,0.866);
	    \coordinate (kb) at (1.5,-0.866);
	    \coordinate (ja) at (-1.5,0.866);
	    \coordinate (jb) at (-1.5,-0.866);
	    
	    \draw[ultra thick] (j) -- (i) -- (k);
	    \node[below=3pt] at (i) {$c$};
	    \node[below right] at (j) {$j$};
	    \node[below left] at (k) {$k$};
	    \fill (i) circle(2pt);
	    \fill (j) circle(2pt);
	    \fill (k) circle(2pt);
	    
	    \fill (ia) circle(2pt);
	    \fill (ka) circle(2pt);
	    \fill (kb) circle(2pt);
	    \fill (ja) circle(2pt);
	    \fill (jb) circle(2pt);
	    
	    \draw[thick] (i) -- (ia);
	    \draw[thick] (ka) -- (k) -- (kb);
	    \draw[thick] (ja) -- (j) -- (jb);
	    
	    \node at (0,-1.5) (label){$G_1$};
	\end{tikzpicture}
	\end{minipage}
	\begin{minipage}{0.3\textwidth}
	\centering
	\begin{tikzpicture}[scale=1]
        \clip (-2,-1.7) rectangle (2,1.5);
	    \coordinate (j) at (-1,0);
	    \coordinate (i) at (0,0);
	    \coordinate (k) at (1,0);
	    \coordinate (ia) at (0,1);
	    \coordinate (ka) at (1.5,0.866);
	    \coordinate (kb) at (1.5,-0.866);
	    \coordinate (jb) at (-1.5,-0.866);
	    
	    \draw[ultra thick] (j) -- (i) -- (k);
	    \node[below=3pt] at (i) {$c$};
	    \node[below right] at (j) {$j$};
	    \node[below left] at (k) {$k$};
	    \fill (i) circle(2pt);
	    \fill (j) circle(2pt);
	    \fill (k) circle(2pt);
	    
	    \fill (ia) circle(2pt);
	    \fill (ka) circle(2pt);
	    \fill (kb) circle(2pt);
	    \fill (jb) circle(2pt);
	    
	    \draw[thick] (i) -- (ia);
	    \draw[thick] (ka) -- (k) -- (kb);
	    \draw[thick] (ia) -- (j) -- (jb);
	    
	    \node at (0,-1.5) (label){$G_2$};
	\end{tikzpicture}
	\end{minipage}
	\begin{minipage}{0.3\textwidth}
	\centering
	\begin{tikzpicture}[scale=1]
        \clip (-2,-1.7) rectangle (2,1.5);
	    \coordinate (j) at (-1,0);
	    \coordinate (i) at (0,0);
	    \coordinate (k) at (1,0);
	    \coordinate (ia) at (0,1);
	    \coordinate (kb) at (1.5,-0.866);
	    \coordinate (jb) at (-1.5,-0.866);
	    
	    \draw[ultra thick] (j) -- (i) -- (k);
	    \node[below=3pt] at (i) {$c$};
	    \node[below right] at (j) {$j$};
	    \node[below left] at (k) {$k$};
	    \fill (i) circle(2pt);
	    \fill (j) circle(2pt);
	    \fill (k) circle(2pt);
	    
	    \fill (ia) circle(2pt);
	    \fill (kb) circle(2pt);
	    \fill (jb) circle(2pt);
	    
	    \draw[thick] (i) -- (ia);
	    \draw[thick] (ia) -- (k) -- (kb);
	    \draw[thick] (ia) -- (j) -- (jb);
	    
	    \node at (0,-1.5) (label){$G_3$};
	\end{tikzpicture}
	\end{minipage}
	\\ 
	\begin{minipage}{0.3\textwidth}
	\centering
	\begin{tikzpicture}[scale=1]
        \clip (-2,-1.7) rectangle (2,1.5);
	    \coordinate (j) at (-1,0);
	    \coordinate (i) at (0,0);
	    \coordinate (k) at (1,0);
	    \coordinate (ia) at (0,1);
	    \coordinate (b) at (0, -1);
	    
	    \draw[ultra thick] (j) -- (i) -- (k);
	    \node[below=3pt] at (i) {$c$};
	    \node[below] at (j) {$j$};
	    \node[below] at (k) {$k$};
	    \fill (i) circle(2pt);
	    \fill (j) circle(2pt);
	    \fill (k) circle(2pt);
	    
	    \fill (ia) circle(2pt);
	    \fill (b) circle(2pt);
	    
	    \draw[thick] (i) -- (ia);
	    \draw[thick] (ia) -- (k) -- (b);
	    \draw[thick] (ia) -- (j) -- (b);
	    
	    \node at (0,-1.5) (label){$G_4$};
	\end{tikzpicture}
	\end{minipage} 
	\begin{minipage}{0.3\textwidth}
	\centering
	\begin{tikzpicture}[scale=1]
        \clip (-2,-1.7) rectangle (2,1.5);
	    \coordinate (j) at (-1,0);
	    \coordinate (i) at (0,0);
	    \coordinate (k) at (1,0);
	    \coordinate (ia) at (0,1);
	    \coordinate (ib) at (0,0.5);
	    \coordinate (b) at (0, -1);
	    
	    \draw[ultra thick] (j) -- (i) -- (k);
	    \node[below=3pt] at (i) {$c$};
	    \node[below] at (j) {$j$};
	    \node[below] at (k) {$k$};
	    \fill (i) circle(2pt);
	    \fill (j) circle(2pt);
	    \fill (k) circle(2pt);
	    
	    \fill (ia) circle(2pt);
	    \fill (ib) circle(2pt);
	    \fill (b) circle(2pt);
	    
	    \draw[thick] (i) -- (ib);
	    \draw[thick] (ia) -- (k) -- (b);
	    \draw[thick] (ia) -- (j) -- (b);
	    
	    \node at (0,-1.5) (label){$G_5$};
	\end{tikzpicture}
	\end{minipage} 
	\begin{minipage}{0.3\textwidth}
	\centering
	\begin{tikzpicture}[scale=1]
        \clip (-2,-1.7) rectangle (2,1.5);
	    \coordinate (j) at (-1,0);
	    \coordinate (i) at (0,0);
	    \coordinate (k) at (1,0);
	    \coordinate (ia) at (0,1);
	    \coordinate (ka) at (1.5, 0.866); 
	    \coordinate (b) at (0, -1);
	    
	    \draw[ultra thick] (j) -- (i) -- (k);
	    \node[below=3pt] at (i) {$c$};
	    \node[below] at (j) {$j$};
	    \node[below] at (k) {$k$};
	    \fill (i) circle(2pt);
	    \fill (j) circle(2pt);
	    \fill (k) circle(2pt);
	    
	    \fill (ia) circle(2pt);
	    \fill (ka) circle(2pt);
	    \fill (b) circle(2pt);
	    
	    \draw[thick] (i) -- (ia);
	    \draw[thick] (ka) -- (k) -- (b);
	    \draw[thick] (ia) -- (j) -- (b);
	    
	    \node at (0,-1.5) (label){$G_6$};
	\end{tikzpicture}
	\end{minipage} 
	\\ 
	\begin{minipage}{0.3\textwidth}
	\centering
	\begin{tikzpicture}[scale=1]
        \clip (-2,-1.7) rectangle (2,1.5);
	    \coordinate (j) at (-1,0);
	    \coordinate (i) at (0,0);
	    \coordinate (k) at (1,0);
	    \coordinate (ia) at (0,1);
	    \coordinate (ka) at (1.5,0.866);
	    \coordinate (ja) at (-1.5,0.866);
	    \coordinate (b) at (0, -1); 
	    
	    \draw[ultra thick] (j) -- (i) -- (k);
	    \node[below=3pt] at (i) {$c$};
	    \node[below] at (j) {$j$};
	    \node[below] at (k) {$k$};
	    \fill (i) circle(2pt);
	    \fill (j) circle(2pt);
	    \fill (k) circle(2pt);
	    
	    \fill (ia) circle(2pt);
	    \fill (ka) circle(2pt);
	    \fill (ja) circle(2pt);
	    \fill (b) circle(2pt);
	    
	    \draw[thick] (i) -- (ia);
	    \draw[thick] (ka) -- (k);
	    \draw[thick] (ja) -- (j);
	    \draw[thick] (j) -- (b) -- (k);
	    
	    \node at (0,-1.5) (label){$G_7$};
	\end{tikzpicture}
	\end{minipage} 
	\begin{minipage}{0.3\textwidth}
	\centering
	\begin{tikzpicture}[scale=1]
        \clip (-2,-1.7) rectangle (2,1.5);
	    \coordinate (j) at (-1,0);
	    \coordinate (i) at (0,0);
	    \coordinate (k) at (1,0);
	    \coordinate (ia) at (0,1);
	    \coordinate (ka) at (1.5,0.866);
	    \coordinate (ja) at (-1.5,0.866);
	    
	    \draw[ultra thick] (j) -- (i) -- (k);
	    \node[below=3pt] at (i) {$c$};
	    \node[below left] at (j) {$j$};
	    \node[below right] at (k) {$k$};
	    \fill (i) circle(2pt);
	    \fill (j) circle(2pt);
	    \fill (k) circle(2pt);
	    
	    \fill (ia) circle(2pt);
	    \fill (ka) circle(2pt);
	    \fill (ja) circle(2pt);
	    
	    \draw[thick] (i) -- (ia);
	    \draw[thick] (ka) -- (k);
	    \draw[thick] (ja) -- (j);
	    \draw[thick] (1,0) arc (0:-180:1);
	    
	    \node at (0,-1.5) (label){$G_8$};
	\end{tikzpicture}
	\end{minipage} 
	\begin{minipage}{0.3\textwidth}
	\centering
	\begin{tikzpicture}[scale=1]
        \clip (-2,-1.7) rectangle (2,1.5);
	    \coordinate (j) at (-1,0);
	    \coordinate (i) at (0,0);
	    \coordinate (k) at (1,0);
	    \coordinate (ia) at (0,1);
	    \coordinate (ka) at (1.5, 0.866); 
	    
	    \draw[ultra thick] (j) -- (i) -- (k);
	    \node[below=3pt] at (i) {$c$};
	    \node[below left] at (j) {$j$};
	    \node[below right] at (k) {$k$};
	    \fill (i) circle(2pt);
	    \fill (j) circle(2pt);
	    \fill (k) circle(2pt);
	    
	    \fill (ia) circle(2pt);
	    \fill (ka) circle(2pt);
	    
	    \draw[thick] (i) -- (ia);
	    \draw[thick] (ka) -- (k);
	    \draw[thick] (ia) -- (j);
	    \draw[thick] (1,0) arc (0:-180:1);
	    
	    \node at (0,-1.5) (label){$G_9$};
	\end{tikzpicture}
	\end{minipage} 
	\\ 
	\begin{minipage}{0.3\textwidth}
	\centering
	\begin{tikzpicture}[scale=1]
        \clip (-2,-1.7) rectangle (2,1.5);
	    \coordinate (j) at (-1,0);
	    \coordinate (i) at (0,0);
	    \coordinate (k) at (1,0);
	    \coordinate (ia) at (0,1);
	    
	    \draw[ultra thick] (j) -- (i) -- (k);
	    \node[below=3pt] at (i) {$c$};
	    \node[below left] at (j) {$j$};
	    \node[below right] at (k) {$k$};
	    \fill (i) circle(2pt);
	    \fill (j) circle(2pt);
	    \fill (k) circle(2pt);
	    
	    \fill (ia) circle(2pt);
	    
	    \draw[thick] (i) -- (ia);
	    \draw[thick] (ia) -- (k);
	    \draw[thick] (ia) -- (j);
	    \draw[thick] (1,0) arc (0:-180:1);
	    
	    \node at (0,-1.5) (label){$G_{10}$};
	\end{tikzpicture}
	\end{minipage}
	\begin{minipage}{0.3\textwidth}
	\centering
	\begin{tikzpicture}[scale=1]
        \clip (-2,-1.7) rectangle (2,1.5);
	    \coordinate (j) at (-1,0);
	    \coordinate (i) at (0,0);
	    \coordinate (k) at (1,0);
	    \coordinate (ia) at (0,1);
	    \coordinate (ib) at (0,0.5);
	    
	    \draw[ultra thick] (j) -- (i) -- (k);
	    \node[below=3pt] at (i) {$c$};
	    \node[below left] at (j) {$j$};
	    \node[below right] at (k) {$k$};
	    \fill (i) circle(2pt);
	    \fill (j) circle(2pt);
	    \fill (k) circle(2pt);
	    
	    \fill (ia) circle(2pt);
	    \fill (ib) circle(2pt);
	    
	    \draw[thick] (i) -- (ib);
	    \draw[thick] (ia) -- (k);
	    \draw[thick] (ia) -- (j);
	    
	    \draw[thick] (1,0) arc (0:-180:1);
	    
	    \node at (0,-1.5) (label){$G_{11}$};
	\end{tikzpicture}
	\end{minipage}
	\end{center}
	\caption[Triplet environments, $p = 1$]{The triplet environments $\mathcal{E}^{(1)} \big(\tripletMinigraph \big)$ as described in Section~\ref{sec:localityuniformity}.}
	\label{fig:tripletp1Environments}
\end{figure}

\begin{figure}[!ht]
\begin{center}
	\begin{minipage}{0.3\textwidth}
	\centering
	\begin{tikzpicture}[scale=1]
        \clip (-2,-2.5) rectangle (2,2.2);
	    \coordinate (c) at (0,0);
	    \coordinate (i) at (0,1);
	    \coordinate (k) at (0.866,-0.5);
	    \coordinate (ia) at (0.866,1.5);
	    \coordinate (ib) at (-0.866,1.5);
	    \coordinate (j) at (-0.866,-0.5);
	    \coordinate (ja) at (-1.732,0);
	    \coordinate (jb) at (-0.866,-1.5);
	    \coordinate (ka) at (1.732,0);
	    \coordinate (kb) at (0.866,-1.5);

	    \draw[ultra thick] (c) -- (i);
	    \draw[ultra thick] (c) -- (j);
	    \draw[ultra thick] (c) -- (k);
	    \node[below] at (c) {$c$};
	    \node[below right] at (i) {$j$};
	    \node[below right] at (j) {$k$};
	    \node[below right] at (k) {$\ell$};
	    \fill (c) circle(2pt);
	    \fill (i) circle(2pt);
	    \fill (j) circle(2pt);
	    \fill (k) circle(2pt);

	    \draw[thick] (ia) -- (i) -- (ib);
	    \draw[thick] (ja) -- (j) -- (jb);
	    \draw[thick] (ka) -- (k) -- (kb);

	    \fill (ia) circle(2pt);
	    \fill (ib) circle(2pt);
	    \fill (ja) circle(2pt);
	    \fill (jb) circle(2pt);
	    \fill (ka) circle(2pt);
	    \fill (kb) circle(2pt);
	    
	    \node at (0,-2) (label){$G_1$};
	\end{tikzpicture}
	\end{minipage}
	\begin{minipage}{0.3\textwidth}
	\centering
	\begin{tikzpicture}[scale=1]
	\clip (-2,-2.5) rectangle (2,2.2);
	    \coordinate (c) at (0,0);
	    \coordinate (i) at (0,1);
	    \coordinate (j) at (-0.866,-0.5);
	    \coordinate (k) at (0.866,-0.5);
	    \coordinate (ja) at (-1.866,-0.5);
	    \coordinate (ka) at (1.866,-0.5);
	    \coordinate (jk) at (0,-1);
	    \coordinate (ia) at (0.866,1.5);
	    \coordinate (ib) at (-0.866,1.5);
	    
	    \draw[ultra thick] (c) -- (i);
	    \draw[ultra thick] (c) -- (j);
	    \draw[ultra thick] (c) -- (k);
	    \node[below] at (c) {$c$};
	    \node[below right] at (i) {$j$};
	    \node[below] at (j) {$k$};
	    \node[below] at (k) {$\ell$};
	    \fill (c) circle(2pt);
	    \fill (i) circle(2pt);
	    \fill (j) circle(2pt);
	    \fill (k) circle(2pt);

        \draw[thick] (j) -- (jk) -- (k);
        \fill (jk) circle(2pt);
        \draw[thick] (ia) -- (i) -- (ib);
        \fill (ia) circle(2pt);
	    \fill (ib) circle(2pt);
	    \draw[thick] (j) -- (ja);
        \fill (ja) circle(2pt);
        \draw[thick] (k) -- (ka);
        \fill (ka) circle(2pt);
        
	    \node at (0,-2) (label){$G_2$};
	\end{tikzpicture}
	\end{minipage}
	\begin{minipage}{0.3\textwidth}
	\centering
	\begin{tikzpicture}[scale=1]
	\clip (-2,-2.5) rectangle (2,2.2);
	    \coordinate (c) at (0,0);
	    \coordinate (i) at (0,1);
	    \coordinate (ia) at (0,2);
	    \coordinate (j) at (-0.866,-0.5);
	    \coordinate (ja) at (-1.732,-1);
	    \coordinate (k) at (0.866,-0.5);
	    \coordinate (ka) at (1.732,-1);
	    \coordinate (cp) at (0.866,0.2);
	    
	    \draw[ultra thick] (c) -- (i);
	    \draw[ultra thick] (c) -- (j);
	    \draw[ultra thick] (c) -- (k);
	    \node[above left] at (c) {$c$};
	    \node[left] at (i) {$j$};
	    \node[below] at (j) {$k$};
	    \node[below] at (k) {$\ell$};
	    \fill (c) circle(2pt);
	    \fill (i) circle(2pt);
	    \fill (j) circle(2pt);
	    \fill (k) circle(2pt);triangle

        \draw[thick] (cp) -- (i);
	    \draw[thick] (cp) -- (j);
	    \draw[thick] (cp) -- (k);
	    \fill (cp) circle(2pt);
	    
	    \draw[thick] (ja) -- (j);
	    \fill (ja) circle(2pt);
	    \draw[thick] (ia) -- (i);
	    \fill (ia) circle(2pt);environment
	    \draw[thick] (ka) -- (k);
	    \fill (ka) circle(2pt);
	    \node at (0,-2) (label){$G_3$};
	\end{tikzpicture}
	\end{minipage}
	\\
	%==============================================
	%==============================================
	\begin{minipage}{0.3\textwidth}
	\centering
	\begin{tikzpicture}[scale=1]
        \clip (-2,-2.5) rectangle (2,2.2);
	    \coordinate (c) at (0,0);
	    \coordinate (i) at (0,1);
	    \coordinate (ia) at (0.866,1.5);
	    \coordinate (ib) at (-0.866,1.5);
	    \coordinate (j) at (-0.866,-0.5);
	    \coordinate (k) at (0.866,-0.5);
	    \coordinate (jka) at (-0.366,-1.366);
	    \coordinate (jkb) at (0.366,-1.366);

	    \draw[ultra thick] (c) -- (i);
	    \draw[ultra thick] (c) -- (j);
	    \draw[ultra thick] (c) -- (k);
	    \node[below] at (c) {$c$};
	    \node[below right] at (i) {$j$};
	    \node[below left] at (j) {$k$};
	    \node[below right] at (k) {$\ell$};
	    \fill (c) circle(2pt);
	    \fill (i) circle(2pt);
	    \fill (j) circle(2pt);
	    \fill (k) circle(2pt);

	    \draw[thick] (ia) -- (i) -- (ib);

	    \fill (ia) circle(2pt);
	    \fill (ib) circle(2pt);

	    \fill (jka) circle(2pt);
	    \fill (jkb) circle(2pt);
	    
	    \draw[thick] (j) -- (jka) -- (k);
	    \draw[thick] (j) -- (jkb) -- (k);
	    
	    \node at (0,-2) (label){$G_4$};
	\end{tikzpicture}
	\end{minipage}
	\begin{minipage}{0.3\textwidth}
	\centering
	\begin{tikzpicture}[scale=1]
	\clip (-2,-2.5) rectangle (2,2.2);
	    \coordinate (c) at (0,0);
	    \coordinate (i) at (0,1);
	    \coordinate (j) at (-0.866,-0.5);
	    \coordinate (ja) at (-1.732,-1);
	    \coordinate (k) at (0.866,-0.5);
	    \coordinate (ka) at (1.732,-1);
	    \coordinate (ij) at (-0.866,0.5);
	    \coordinate (ik) at (0.866,0.5);
	   
	    \draw[ultra thick] (c) -- (i);
	    \draw[ultra thick] (c) -- (j);
	    \draw[ultra thick] (c) -- (k);
	    \node[below] at (c) {$c$};
	    \node[above] at (i) {$j$};
	    \node[below] at (j) {$k$};
	    \node[below] at (k) {$\ell$};
	    \fill (c) circle(2pt);
	    \fill (i) circle(2pt);
	    \fill (j) circle(2pt);
	    \fill (k) circle(2pt);triangle

	    \draw[thick] (ja) -- (j);
	    \fill (ja) circle(2pt);
	    \draw[thick] (ka) -- (k);
	    \fill (ka) circle(2pt);
	    \node at (0,-2) (label){$G_5$};
	    
	    \fill (ij) circle(2pt);
	    \fill (ik) circle(2pt);
	    \draw[thick] (j) -- (ij) -- (i) -- (ik) -- (k);
	\end{tikzpicture}
	\end{minipage}
	\begin{minipage}{0.3\textwidth}
	\centering
	\begin{tikzpicture}[scale=1]
	\clip (-2,-2.5) rectangle (2,2.2);
	    \coordinate (c) at (0,0);
	    \coordinate (i) at (0,1);
	    \coordinate (ia) at (0,2);
	    \coordinate (j) at (-0.866,-0.5);
	    \coordinate (k) at (0.866,-0.5);
	    \coordinate (cp) at (0.866,0.2);
	    \coordinate (jk) at (0,-1);
	    
	    \draw[ultra thick] (c) -- (i);
	    \draw[ultra thick] (c) -- (j);
	    \draw[ultra thick] (c) -- (k);
	    \node[above left] at (c) {$c$};
	    \node[left] at (i) {$j$};
	    \node[below] at (j) {$k$};
	    \node[below] at (k) {$\ell$};
	    \fill (c) circle(2pt);
	    \fill (i) circle(2pt);
	    \fill (j) circle(2pt);
	    \fill (k) circle(2pt);

        \draw[thick] (cp) -- (i);
	    \draw[thick] (cp) -- (j);
	    \draw[thick] (cp) -- (k);
	    \fill (cp) circle(2pt);
	    
	    \fill (jk) circle(2pt);
	    \draw[thick] (j)-- (jk) -- (k);
	    \draw[thick] (ia) -- (i);
	    \fill (ia) circle(2pt);

	    \node at (0,-2) (label){$G_6$};
	\end{tikzpicture}
	\end{minipage}
	\\
	%==============================================
	%==============================================
	\begin{minipage}{0.4\textwidth}
	\centering
		\begin{tikzpicture}[scale=1]
	\clip (-2,-2.5) rectangle (2,2.2);
	    \coordinate (c) at (0,0);
	    \coordinate (i) at (0,1);
	    \coordinate (j) at (-0.866,-0.5);
	    \coordinate (k) at (0.866,-0.5);
	    \coordinate (cp) at (0.866,0.2);
	    \coordinate (cpp) at (-0.866,0.2);
	    \coordinate (jk) at (0,-1);
	    
	    \draw[ultra thick] (c) -- (i);
	    \draw[ultra thick] (c) -- (j);
	    \draw[ultra thick] (c) -- (k);
	    \node[below=5pt] at (c) {$c$};
	    \node[above] at (i) {$j$};
	    \node[below] at (j) {$k$};
	    \node[below] at (k) {$\ell$};
	    \fill (c) circle(2pt);
	    \fill (i) circle(2pt);
	    \fill (j) circle(2pt);
	    \fill (k) circle(2pt);
G
        \draw[thick] (cp) -- (i);
	    \draw[thick] (cp) -- (j);
	    \draw[thick] (cp) -- (k);
	    \fill (cp) circle(2pt);
	    
	    \draw[thick] (cpp) -- (i);
	    \draw[thick] (cpp) -- (j);
	    \draw[thick] (cpp) -- (k);
	    \fill (cpp) circle(2pt);
	    \node at (0,-2) (label){$G_7$};
	\end{tikzpicture}
	\end{minipage}
	\begin{minipage}{0.4\textwidth}
	\centering
		\begin{tikzpicture}[scale=1]
    	\clip (-3, -3) rectangle (3, 3);
	    \coordinate (c) at (0,0);
	    \coordinate (i) at (0,1);
	    \coordinate (j) at (-0.866,-0.5);
	    \coordinate (k) at (0.866,-0.5);
	    \coordinate (ij) at (-0.866,0.5);
	    \coordinate (ik) at (0.866,0.5);
	    \coordinate (jk) at (0,-1);
	   
	    \draw[ultra thick] (c) -- (i);
	    \draw[ultra thick] (c) -- (j);
	    \draw[ultra thick] (c) -- (k);
	    \node[below] at (c) {$c$};
	    \node[above] at (i) {$j$};
	    \node[below] at (j) {$k$};
	    \node[below] at (k) {$\ell$};
	    \fill (c) circle(2pt);5
	    \fill (i) circle(2pt);
	    \fill (j) circle(2pt);
	    \fill (k) circle(2pt);triangle

	    \fill (ij) circle(2pt);
	    \fill (ik) circle(2pt);
	    \fill (jk) circle(2pt);
	    \draw[thick] (j) -- (ij) -- (i) -- (ik) -- (k) -- (jk) -- (j);
	    \node at (0,-2) (label){$G_8$};
	\end{tikzpicture}
	\end{minipage}
	\end{center}
	\caption[Triangle-free star environments, $p = 1$]{Triangle-free star environments in $\mathcal{E}^{(1)} \bigg( \triStarMinigraph \bigg)$ as described in Section~\ref{sec:localityuniformity}.}
	\label{fig:trifreestarp1Environments}
\end{figure}

\begin{figure}[!ht]
\begin{center}
	\begin{minipage}{0.45\textwidth}
	\centering
	\begin{tikzpicture}[scale=1]
        \clip (-3, -3) rectangle (3, 3);
	    \coordinate (j) at (-1,0);
	    \coordinate (i) at (0,0);
	    \coordinate (k) at (1,0);
	    \coordinate (ia) at (0,1);
	    \coordinate (ka) at (1.5,0.866);
	    \coordinate (kb) at (1.5,-0.866);
	    \coordinate (ja) at (-1.5,0.866);
	    \coordinate (jb) at (-1.5,-0.866);
	    
	    \draw[ultra thick] (j) -- (i) -- (k);
	    \node[below=3pt] at (i) {$c$};
	    \node[below right] at (j) {$j$};
	    \node[below left] at (k) {$k$};
	    \fill (i) circle(2pt);
	    \fill (j) circle(2pt);
	    \fill (k) circle(2pt);
	    
	    \fill (ia) circle(2pt);
	    \fill (ka) circle(2pt);
	    \fill (kb) circle(2pt);
	    \fill (ja) circle(2pt);
	    \fill (jb) circle(2pt);
	    
	    \draw[thick] (i) -- (ia);
	    \draw[thick] (ka) -- (k) -- (kb);
	    \draw[thick] (ja) -- (j) -- (jb);
	    
	    \node at (0,-2.5) (label){$T^{(1)} \big( \tripletMinigraph \big)$};
	\end{tikzpicture}
	\end{minipage}
	\begin{minipage}{0.45\textwidth}
	\centering
	\begin{tikzpicture}[scale=1]
        \clip (-3,-3) rectangle (3,3);
	    \coordinate (j) at (-1,0);
	    \coordinate (i) at (0,0);
	    \coordinate (k) at (1,0);
	    \coordinate (ia) at (0,1);
	    \coordinate (ja) at (-1.5,0.866);
	    \coordinate (jb) at (-1.5,-0.866);
	    \coordinate (ka) at (1.5,0.866);
	    \coordinate (kb) at (1.5,-0.866);
	    
	    \coordinate (jaa) at (-1.5, 1.866); 
	    \coordinate (jab) at (-2.5, 0.866); 
	    \coordinate (jba) at (-1.5, -1.866); 
	    \coordinate (jbb) at (-2.5, -0.866);

	    \coordinate (kaa) at (1.5, 1.866); 
	    \coordinate (kab) at (2.5, 0.866); 
	    \coordinate (kba) at (1.5, -1.866); 
	    \coordinate (kbb) at (2.5, -0.866); 
	    
	    \coordinate (iaa) at (-0.707, 1.707); 
	    \coordinate (iab) at (0.707, 1.707); 
	    
	    \draw[ultra thick] (j) -- (i) -- (k);
	    \node[below=3pt] at (i) {$c$};
	    \node[below right] at (j) {$j$};
	    \node[below left] at (k) {$k$};
	    \fill (i) circle(2pt);
	    \fill (j) circle(2pt);
	    \fill (k) circle(2pt);
	    
	    \fill (ia) circle(2pt);
	    \fill (ka) circle(2pt);
	    \fill (kb) circle(2pt);
	    \fill (ja) circle(2pt);
	    \fill (jb) circle(2pt);
	    
	    \fill (jaa) circle(2pt);
	    \fill (jab) circle(2pt);
	    \fill (kaa) circle(2pt);
	    \fill (kab) circle(2pt);
	    \fill (jba) circle(2pt);
	    \fill (jbb) circle(2pt);
	    \fill (kba) circle(2pt);
	    \fill (kbb) circle(2pt);
	    \fill (iaa) circle(2pt);
	    \fill (iab) circle(2pt);
	    
	    \draw[thick] (i) -- (ia);
	    \draw[thick] (ka) -- (k) -- (kb);
	    \draw[thick] (ja) -- (j) -- (jb);
	    \draw[thick] (jaa) -- (ja) -- (jab);
	    \draw[thick] (kaa) -- (ka) -- (kab);
	    \draw[thick] (jba) -- (jb) -- (jbb);
	    \draw[thick] (kba) -- (kb) -- (kbb);
	    \draw[thick] (iaa) -- (ia) -- (iab);
	    
	    \node at (0,-2.5) (label){$T^{(2)} \big( \tripletMinigraph \big)$};
	\end{tikzpicture}
	\end{minipage}
	\end{center}
	\caption[Triplet trees]{The triplet trees $T^{(1)} \big( \tripletMinigraph \big)$ and $T^{(2)} \big( \tripletMinigraph \big)$ constructed as described in Section~\ref{sec:localityuniformity}.}
	\label{fig:triplettrees}
\end{figure}

\begin{figure}[!ht]
\begin{center}
	\begin{minipage}{0.45\textwidth}
	\centering
	\begin{tikzpicture}[scale=1]
        \clip (-3, -3) rectangle (3, 3);
	    \coordinate (c) at (0,0);
	    \coordinate (i) at (0,1);
	    \coordinate (k) at (0.866,-0.5);
	    \coordinate (ia) at (0.866,1.5);
	    \coordinate (ib) at (-0.866,1.5);
	    \coordinate (j) at (-0.866,-0.5);
	    \coordinate (ja) at (-1.732,0);
	    \coordinate (jb) at (-0.866,-1.5);
	    \coordinate (ka) at (1.732,0);
	    \coordinate (kb) at (0.866,-1.5);

	    \draw[ultra thick] (c) -- (i);
	    \draw[ultra thick] (c) -- (j);
	    \draw[ultra thick] (c) -- (k);
	    \node[below] at (c) {$c$};
	    \node[below right] at (i) {$j$};
	    \node[below right] at (j) {$k$};
	    \node[below right] at (k) {$\ell$};
	    \fill (c) circle(2pt);
	    \fill (i) circle(2pt);
	    \fill (j) circle(2pt);
	    \fill (k) circle(2pt);

	    \draw[thick] (ia) -- (i) -- (ib);
	    \draw[thick] (ja) -- (j) -- (jb);
	    \draw[thick] (ka) -- (k) -- (kb);

	    \fill (ia) circle(2pt);
	    \fill (ib) circle(2pt);
	    \fill (ja) circle(2pt);
	    \fill (jb) circle(2pt);
	    \fill (ka) circle(2pt);
	    \fill (kb) circle(2pt);
	    
	    \node at (0,-2.5) (label){$T^{(1)} \bigg( \triStarMinigraph \bigg)$};
	\end{tikzpicture}
	\end{minipage}
	\begin{minipage}{0.45\textwidth}
	\centering
	\begin{tikzpicture}[scale=1]
        \clip (-3, -3) rectangle (3, 3);
	    \coordinate (c) at (0,0);
	    \coordinate (j) at (0,1);
	    \coordinate (l) at (0.866,-0.5);
	    \coordinate (k) at (-0.866,-0.5);
	    \coordinate (ja) at (0.866,1.5);
	    \coordinate (jb) at (-0.866,1.5);
	    \coordinate (ka) at (-1.732,0);
	    \coordinate (kb) at (-0.866,-1.5);
	    \coordinate (la) at (1.732,0);
	    \coordinate (lb) at (0.866,-1.5);
	    
	    \coordinate (jaa) at (-1.3, 1.25);
	    \coordinate (jab) at (-0.866, 2);
	    \coordinate (jba) at (1.3, 1.25);
	    \coordinate (jbb) at (0.866, 2);
	    \coordinate (kaa) at (-1.732, 0.5);
	    \coordinate (kab) at (-2.165, -0.25);
	    \coordinate (kba) at (-0.433, -1.732);
	    \coordinate (kbb) at (-1.3, -1.732);
	    \coordinate (laa) at (1.732, 0.5);
	    \coordinate (lab) at (2.165, -0.25);
	    \coordinate (lba) at (0.433, -1.732);
	    \coordinate (lbb) at (1.3, -1.732);

	    \draw[ultra thick] (c) -- (j);
	    \draw[ultra thick] (c) -- (k);
	    \draw[ultra thick] (c) -- (l);
	    \node[below] at (c) {$c$};
	    \node[below right] at (j) {$j$};
	    \node[below right] at (k) {$k$};
	    \node[below right] at (l) {$\ell$};
	    \fill (c) circle(2pt);
	    \fill (j) circle(2pt);
	    \fill (k) circle(2pt);
	    \fill (l) circle(2pt);
	    
	    \fill (ja) circle(2pt);
	    \fill (jb) circle(2pt);
	    \fill (ka) circle(2pt);
	    \fill (kb) circle(2pt);
	    \fill (la) circle(2pt);
	    \fill (lb) circle(2pt);
	    
	    \fill (jaa) circle(2pt); 
	    \fill (jab) circle(2pt); 
	    \fill (jba) circle(2pt); 
	    \fill (jbb) circle(2pt); 
	    \fill (kaa) circle(2pt); 
	    \fill (kab) circle(2pt); 
	    \fill (kba) circle(2pt); 
	    \fill (kbb) circle(2pt); 
	    \fill (laa) circle(2pt); 
	    \fill (lab) circle(2pt); 
	    \fill (lba) circle(2pt); 
	    \fill (lbb) circle(2pt); 

	    \draw[thick] (ja) -- (j) -- (jb);
	    \draw[thick] (ka) -- (k) -- (kb);
	    \draw[thick] (la) -- (l) -- (lb);
	    
	    \draw[thick] (jaa) -- (jb) -- (jab); 
	    \draw[thick] (jba) -- (ja) -- (jbb); 
	    \draw[thick] (kaa) -- (ka) -- (kab); 
	    \draw[thick] (kba) -- (kb) -- (kbb); 
	    \draw[thick] (laa) -- (la) -- (lab); 
	    \draw[thick] (lba) -- (lb) -- (lbb);

	    \node at (0,-2.5) (label){$T^{(2)} \bigg( \triStarMinigraph \bigg)$};
	\end{tikzpicture}
	\end{minipage}
	\end{center}
	\caption[Star trees]{The star trees $T^{(1)} \bigg( \triStarMinigraph \bigg)$ and $T^{(2)} \bigg( \triStarMinigraph \bigg)$ constructed as described in Section~\ref{sec:localityuniformity}.}
	\label{fig:startrees}
\end{figure}

\clearpage
\begin{landscape}
\section{Witness angles}

\begin{figure}[!ht]
\begin{center}
\begin{tabular}{ c | c c c c c c | c c c c c c }
Method & \multicolumn{6}{c|}{$\beta$} & \multicolumn{6}{c}{$\gamma$}\\ \hline
\tfklqp{2} & $0.99225$ & $3.46308$ & & & & & $5.78009$ & $2.25304$ & & & & \\
\thlzqp{2} & $0.98705$ & $3.47167$ & & & & & $5.77664$ & $2.25962$ & & & & \\
\hline
\tfklqp{3} & $0.62112$ & $0.48905$ & $0.26477$ & & & & $0.42728$ & $0.79596$ & $0.92620$ & & &\\
\thlzqp{3} & $0.62519$ & $0.49754$ & $0.27393$ & & & & $0.42808$ & $0.79569$ & $0.92077$ & & &\\
\hline
\qaoap{4}& $0.59956$ & $0.43434$ & $0.29676$ & $0.15904$ & & & $0.40875$ &$ 0.78057$ &$ 0.98804$ & $0.15691$ & &\\
\tfklqp{4} & $0.63219$ & $2.09215$ & $0.42150$ & $0.22286$ & & & $0.38433$ & $0.72509$ & $0.83266$	 & $0.94350$ & \\
\thlzqp{4} & $0.63516$ & $0.52634$ & $0.43047$ &	$0.23058$ & & & $0.38478$ & $0.72269$ & $0.82767$ & $0.93461$ & \\ \hline
\qaoap{5} & $0.63167$ & $0.52253$ & $1.96094$ & $0.27599$ & $0.14930$ & & $0.35924$ & $0.70609$ & $0.82209$ & $1.00420$ & $1.15394$ \\
\tfklqp{5} & $0.64008$ & $0.54030$ & $0.45437$ &	$0.34000$ & $0.18710$ & & $0.35582$ & $0.68736$ & $0.78042$ & $0.87482$ & $0.99556$ \\
\thlzqp{5} & $0.64349$ & $0.54679$ & $0.46687$ & $0.38838$ & $0.19975$ & & $0.35349$ & $0.68144$ & $0.76945$ & $0.85500$ & $0.96997$\\
\hline
\qaoap{6} & $0.63589$ & $0.53443$ & $0.46334$ & $0.35999$ & $0.25858$ & $0.13885$ & $0.33137$ & $0.64558$ & $0.73165$ & $0.83696$ & $1.01019$ & $1.12724$ \\ 
\tfklqp{6} & $0.64369$ & $0.54870$ & $0.47903$ & $0.40547$ & $1.88825$ & $0.16000$ & $0.33434$ & $0.64986$ & $0.73024$ & $0.81681$ & $0.93262$ & $1.04923$ \\
\thlzqp{6} & $0.64622$ & $0.55243$ & $0.48572$ & $0.42258$ & $1.91095$ & $0.17045$ & $0.33265$ & $0.64669$ & $0.72479$ & $0.80326$ & $0.90278$ & $1.01496$
\end{tabular}
\end{center}
\caption{Witness angles $\beta=\beta_1,\beta_2,\ldots$ and $\gamma=\gamma_1,\gamma_2,\ldots$ certifying the approximation ratios of \qaoap{p}, \tfklqp{p} and \thlzqp{p} on graphs of girth greater than $2p + 2$ for $p \in \{2, \ldots, 6\}$.}
\label{fig:angles}
\end{figure}
\end{landscape}
\enlargethispage{10\baselineskip}

\end{document}